\newcommand{\braket}[1]{\left< #1 \right>}
\def\D{\mathrm{d}}
\newcommand{\RR}{\mathbb{R}}
\newcommand{\NN}{\mathbb{N}}
\newcommand{\curve}{\gamma}
\newcommand{\rfreespace}{\mathcal{R}}
\newcommand{\freespace}{\mathcal{D}}
\newcommand{\freespacediagram}{\mathrm{FSD}}
\newcommand{\distf}{\mathrm{d}_{\mathcal{F}}}
\newcommand{\regs}{\mathcal{R}}
\newcommand{\sings}{I_{\text{sing}}}
\newcommand{\intersections}{\mathcal{I}}
\newcommand{\subcell}{\mathcal{S}}
\DeclareMathOperator{\simp}{simp}
\def\eps{\epsilon}
\def\vareps{\varepsilon}
\title{Revisiting the Fréchet distance between piecewise smooth curves}
\author{Jacobus Conradi}{University of Bonn, Bonn, Germany}{jacobus.conradi@gmx.de}{https://orcid.org/0000-0002-8259-1187}{Partially funded by the Deutsche Forschungsgemeinschaft (DFG, German Research Foundation) - 313421352 (FOR 2535 Anticipating Human Behavior). Affiliated with Lamarr Institute for Machine Learning and Artificial Intelligence.}
\author{Anne Driemel}{Hausdorff Center for Mathematics, University of Bonn, Germany}{driemel@cs.uni-bonn.de}{https://orcid.org/0000-0002-1943-2589}{Affiliated with Lamarr Institute for Machine Learning and Artificial Intelligence.}
\author{Benedikt Kolbe}{Hausdorff Center for Mathematics, University of Bonn, Bonn, Germany}{benedikt.kolbe@physik.hu-berlin.de}{}{Partially funded by the Deutsche Forschungsgemeinschaft (DFG, German Research Foundation) – 459420781.}
\authorrunning{Conradi, Driemel, and Kolbe}
\keywords{Fréchet distance, piecewise smooth curves, decision algorithm}%, Approximation algorithms}
\begin{document}

\maketitle

\begin{abstract}

Since its introduction to computational geometry by Alt and Godau in 1992, the Fr\'echet distance has been a mainstay of algorithmic research on curve similarity computations. The focus of the research has been on comparing polygonal curves, with the notable exception of an algorithm for the decision problem for planar piecewise smooth curves due to Rote (2007). We present an algorithm for the decision problem for piecewise smooth curves that is both conceptually simpler and naturally extends to the first algorithm for the problem for piecewise smooth curves in $\RR^d$.

We assume that the algorithm is given two continuous curves, each consisting of a sequence of $m$, resp.\ $n$, smooth pieces, where each piece belongs to a sufficiently well-behaved class of curves, such as the set of algebraic curves of bounded degree. We introduce a decomposition of the free space diagram into a controlled number of pieces that can be used to solve the decision problem similarly to the polygonal case, in $O(mn)$ time, leading to a computation of the Fréchet distance that runs in $O(mn\log(mn))$ time.

Furthermore, we study approximation algorithms for piecewise smooth curves that are also $c$-packed for some fixed value $c$. 
We adapt the existing framework for $(1+\vareps)$-approximations and show that an approximate decision can be computed in $O(cn/\vareps)$ time for any $\vareps > 0$.

%A curve is $c$-packed if its length inside any fixed Euclidean ball is bounded by $c$ times the radius of the ball. 
\end{abstract}

%\newpage
%\setcounter{page}{1}

\section{Introduction and motivation}\label{sec:intro}
The Fréchet distance is a well-studied distance measure between curves, with a long history in both applications and algorithmic research.
The wealth of work surrounding the analysis of algorithms for computing the Fréchet distance is centered primarily on polygonal curves. However, more complicated curves and especially splines have become commonplace in industrial applications for, e.g., computer graphics, robotics and to represent motion tracking or planning data. In such applications, the number of dimensions corresponds to the number of tracked parameters, leading to a high-dimensional ambient space. On the other hand, polygonal curves in applications often represent (unnecessarily complex) discretizations approximating an underlying smooth curve that can be described more efficiently as a smooth curve using only a small number of parameters. For an algorithmic example, smooth curves may help in more naturally describing the average of a set of polygonal curves. A crucial prerequisite to using smooth curves similarly to polygonal curves in such contexts is the ability to effectively answer elementary algorithmic questions for such curves. A natural and fundamental task in computational geometry is therefore to devise algorithms for the computation of the Fréchet distance between smooth curves such as splines. Despite this, as far as we know, there is no known approach to realizing such a computation for curves in $\RR^d$. To tackle the case of smooth curves in the plane ($d=2$),  Rote~\cite{Rote2007} introduced an approach based on analyzing the turning angle and planar curvature of the planar curves. However, this approach does not easily generalize to higher dimensions. We revisit this problem and present a novel, simpler approach, with the additional benefit that it works for higher dimensions, with the same time complexity. Our methods are conceptually simple, but rely on a number of key technical ingredients. %As a result, while most of the introduced concepts can be described in simple terms, the derivation of the required theoretical results becomes somewhat more involved, which should not distract from the simplicity of the approach. 

\subparagraph{{Problem definition}}
Throughout the paper, $\curve_1$ and $\curve_2$ will be used to denote two piecewise smooth curves in $\RR^d$ with $d$ fixed, that is, continuous maps $\curve_1,\curve_2:[0,1]\to \RR^d$ that are comprised of $m$ and $n$ smooth pieces, each of class $C^2$. 
Let $A_{[0,1]}$ be the set of continuous and bijective maps $\alpha:[0,1]\to [0,1]$ that are increasing.
The \textbf{Fréchet distance} between $\curve_1$ and $\curve_2$ is defined as $\distf(\curve_1,\curve_2):=\inf\limits_{\alpha,\beta\in A_{[0,1]}} \max\limits_{t\in [0,1]} \| \curve_1(\alpha(t))-\curve_2(\beta(t))\|$.
Our methods naturally allow any fixed $\ell_p$ norm with $1<p<\infty$ for the norm $\|\cdot \|$ (the cases $p=1,\infty$, while possible, would add a level of technicality to our treatment that distracts from its relative simplicity). %Note that because of the equivalence of norms in $\RR^d$, an algorithm to compute the case of the, say, $\ell_\infty$-norm provides a constant ($O(d)$) factor approximation algorithm for any other norm.
To compute $\distf(\curve_1,\curve_2)$, we mostly focus on the \textbf{decision problem} of deciding whether the Fréchet distance between two piecewise smooth curves is at most a given $\delta>0$.

\subparagraph{Results}
Our first main contribution is that we establish an algorithm to solve the decision problem for the Fréchet distance between piecewise smooth curves. Assuming that the curves are \textbf{algebraically bounded curves}, i.e., piecewise smooth algebraic curves where the degree of the curves is bounded by a constant, we obtain a bound of $O(mn)$ for the time complexity of the decision problem, which matches the polygonal case. The running time is independent of the ambient dimension but the algebraic complexity of the operations involved in the algorithm depends on the dimension and the nature of the curves. 
Our algorithm for the decision problem results in an algorithm for the computation of the Fréchet distance for algebraically bounded curves in $O(mn\log(mn))$ time using parametric search, similarly to the polygonal case. 

It is known~\cite{Bringmann2014} that the decision problem cannot be solved in strongly subquadratic time, so research has focused on investigating algorithms for restricted classes of curves. Our second contribution is that we show that we can adapt the framework from~\cite{Driemel2012} for an efficient $(1+\eps)$-approximation algorithm for the Fréchet distance between two $c$-packed, polyognal curves to the setting of $c$-packed piecewise smooth curves in $\RR^d$. To this end, we introduce a simplification procedure for piecewise smooth curves and distill the necessary ingredients to obtain a linear time decision algorithm for algebraically bounded $c$-packed curves.

\subparagraph{Comparison to previous work}
To arrive at an algorithm for the decision problem for smooth planar curves for the $\ell_2$-norm for a given $\delta$ in general position, Rote uses a partitioning of the smooth curves to obtain pieces for which the associated \textbf{free space diagram} $\freespacediagram_{\delta}$ (Section~\ref{sec:combdescfreespace}) is well-behaved. His approach relies on analyzing the curves $\curve_1$ and $\curve_2$, introducing cuts to control the turning angle and at points with a certain value for the planar curvature. In contrast to this, our approach is to analyze the free space diagram directly, by studying the boundary of the \textbf{free space} $\freespace_{\delta}$ in $\freespacediagram_{\delta}$. The free space is defined as the set of parameter value pairs at which the curves are at most a distance of $\delta$ apart. Working directly with the free space not only leads to a conceptually simpler algorithm, but also avoids relatively complicated integral equations related to the curvature and the turning angle. We propose a refined decomposition of each cell of $\freespacediagram_{\delta}$ into a controlled number (depending on the degree of the curves) of subcells (Section~\ref{sec:combdescfreespace}), for which the existence of a monotone path connecting two intervals on the boundary of a subcell can be read off. Here, the role of convexity of the free space in a cell for polygonal curves is replaced by monotonicity of the boundary curves of $\freespace_{\delta}$ within each subcell of the refined decomposition. 
We emphasize here that our construction of the refined decomposition exclusively accesses the same values that are also required in Rote's work to process each subcell of $\freespacediagram_{\delta}$.

Unlike the polygonal case, the free space within a cell of $\freespacediagram_{\delta}$ can be very complicated, as illustrated by a contour plot of the distance function in parameter space for two degree 3 splines in $\RR^3$ in Figure~\ref{fig:contourplot} for different values of $\delta$.
\begin{figure}[t]
  \centering
   \includegraphics[width=\textwidth]{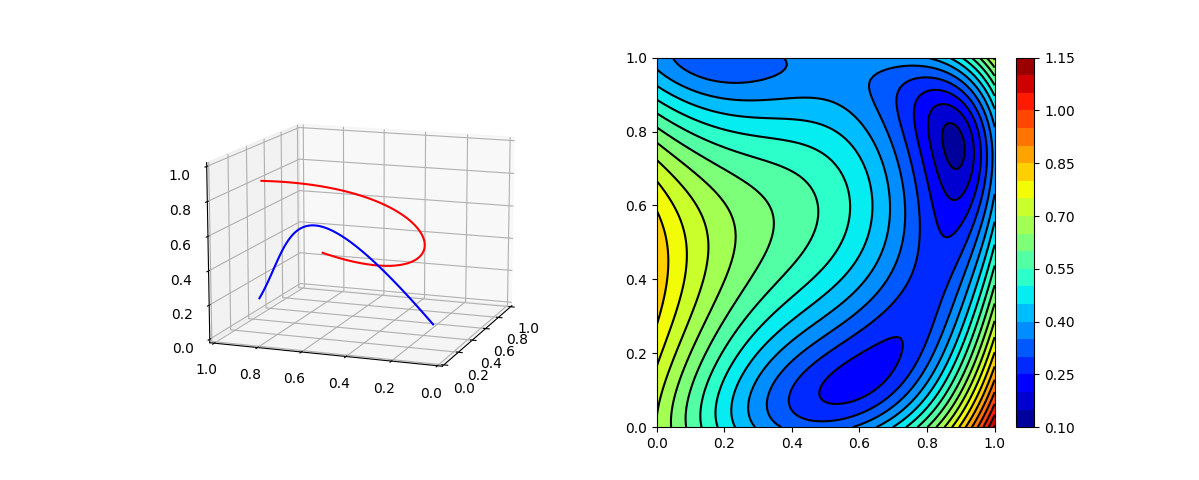}
      \caption{Two smooth curves in $\RR^3$ and a contour plot of the associated distance function in the joint parametric space of the curves. }\label{fig:contourplot}
\end{figure}
Figure~\ref{fig:freespacedecomp} shows another example of the kind of behavior of the free space one can expect within a cell. %One of the central ideas behind our approach is to shift focus from the curves $\{\curve_i\}_i$ in $\RR^d$ to instead investigating the curves representing the boundary of the free space. 

We note that both Rote's decision algorithm as well as ours assume values of $\delta$ for which the boundary of $\freespace_{\delta}$ has no singularities. A key technical result we show (Section~\ref{sec:singularitystructure}) is that singularities of the boundary of $\freespace_{\delta}$ are confined to a small number of critical values of $\delta$ and are not necessary for the computation of $\distf$. Together with the approximation schemes for smooth curves we introduce later on, our results show that the setting of smooth curves provides a natural extended framework for algorithmic approaches to the Fréchet distance.%, allowing for investigations into both its computation as well as approximation algorithms for realistic curves that have proven fruitful in the polygonal case. 

\subparagraph{Computational model}
We make the same assumptions as Rote in his work on planar smooth curves, subsuming the real RAM model. In essence, we assume that we can compare two real solutions of an algebraic equation. Specifically, we require being able to compute the intersection of a curve with a sphere of a given radius centered at a point of another curve and find the parameter values in $[0,1]$ that correspond to the intersections. While we limit the use of such algebraic operations where possible, our approach to the decision problem via construction of a specific partitioning of the free space diagram, the subsequent computation of $\distf$, as well as the simplification ultimately make use of this assumption. Studying the algebraic numbers involved in the computations for different classes of curves goes beyond the scope of this paper. 

%\section{Preliminaries}
\section{A combinatorial description of the free space diagram}\label{sec:combdescfreespace}
%We adapt the following standard definitions for polygonal curves to the setting of piecewise smooth curves.
%By a \textbf{piecewise smooth curve} we mean a (continuous) curve that is smooth everywhere except at finitely many points in $[0,1]$, where smooth means at least $C^2$. 
For two piecewise smooth curves $\curve_1,\curve_2 :[0,1]\to \RR^d$ consisting of $m$ and $n$ pieces, respectively, and $\delta>0$, the \textbf{free space} $\freespace_{\delta}=\freespace_{\delta}(\curve_1,\curve_2)$ is defined as 
\[
\freespace_{\delta}(\curve_1,\curve_2)=\left\{(x,y)\in [0,1]^2 |\|\curve_1(x)-\curve_2(y)\|\le \delta \right\}.
\]
The complement of $\freespace_{\delta}$ in $[0,1]^2$ is commonly referred to as the \textbf{forbidden region}.

There is a natural partition of the joint parameter space $[0,1]^2$ of both curves into $m\cdot n$ rectangular cells such that $\curve_1$ and $\curve_2$ are smooth when restricting to the interior of each rectangle. We refer to the resulting decomposition of $[0,1]^2$ together with the partitioning into the free space and forbidden region as the \textbf{free space diagram} $\freespacediagram_{\delta}$. The primary motivation behind the introduction of the free space diagram is the elementary observation that $\distf(\curve_1,\curve_2)\le \delta$ iff there is a path from $(0,0)$ to $(1,1)$ through the free space in $[0,1]^2$ that is monotone in both coordinates.

\subsection{Overview of the algorithm}

Similarly to the classical polygonal case, to solve the decision problem, we investigate the existence of a monotone (in both coordinates) path from $(0,0)$ to $(1,1)$ in the free space $\freespace_{\delta}$. To this end, we refine the free space diagram using the boundary $B_{\delta}$ of the free space. 
 Our decision algorithm has the following high-level description. 
\begin{enumerate}
    \item Mark the minima and maxima of the boundary $B_{\delta}$ of the free space in $\freespacediagram_{\delta}$ in the $x$ (horizontal) and $y$ (vertical) direction.
    \item Cut each cell of $\freespacediagram_{\delta}$ into subcells, horizontally (vertically) through each marked point if it has a vertical (horizontal) tangent. Mark each point of intersection of a cut with $B_{\delta}$.
    \item For each resulting subcell, pair the marked points on the boundary according to how they are connected by $B_{\delta}$ through monotone arcs, so that adjacent points are paired. 
    \item Solve the decision problem for $\freespacediagram_{\delta}$ using only the marked points and pairings by computing reachable intervals on the boundaries of cells, in particular
    \begin{enumerate}\vspace{-0.3\baselineskip}
        \item process all cells in lexicographical order of their indices (row by row, from the left);
        %inductively computing the intervals reachable by monotone arcs in each subcell,  
        \item for each cell, process all subcells within the cell in lexicographical order.
        %assembling these to obtain the reachable intervals for each original cell. 
    \end{enumerate}
\end{enumerate}

In our analysis, we make certain assumptions on the given $\delta$ with respect to the input curves, which can be ensured by applying a random small perturbation. We follow up our presentation of the decision algorithm with a justification for why this assumption does not lead to loss of generality in the context of using the decision algorithm in practice and to compute the Fréchet distance (Section~\ref{sec:singularitystructure}).

 \subsection{Refining the free space diagram}\label{sec:freespacepartition}

We consider the boundary $B_{\delta}$ of $\freespacediagram_{\delta}$ as a set of curves, as opposed to the boundary of a region. %We then define a certain decomposition of $\freespacediagram_{\delta}$. %and later discuss how to obtain such a decomposition in practice for certain classes of curves. 
We define the three sets $E_h,E_v$ and $\sings$ of points on $B_{\delta}$: 
\begin{enumerate}
\item The set $E_h\subset B_{\delta}$ of extrema of the free space in the $y$ direction (with horizontal tangent). 
\item The set $E_v\subset B_{\delta}$ of extrema of the free space in the $x$ direction (with vertical tangent).
\item The set $\sings$ of singularities of $B_{\delta}$ in the interior of the cells of $\freespacediagram_{\delta}$, consisting of points where $B_{\delta}$ is not $C^1$, which includes points of self-intersection and cusps.
\end{enumerate}
The geometry of $B_{\delta}$ is related to the geometry of equidistant curves and can be quite complicated in general. In Section~\ref{sec:singularitystructure}, we show that for almost all $\delta$, there are no singular points of $B_{\delta}$ in the interior of each cell in $\freespacediagram_{\delta}$ associated to the smooth pieces of the curves, so that $\sings=\emptyset$, after possibly applying a small perturbation to $\delta$. We note that for the norms $\ell_1$ and $\ell_{\infty}$ in the definition of the Fréchet distance, $B_{\delta}$ may contain cusp singularities for all values of $\delta$ in an open interval.

Using the points in $E_h$, $E_v$  (and $\sings$), we define a partition of each cell of the free space diagram $\freespacediagram_{\delta}$ that lends itself to combinatorial investigations into the decision problem. 
For simplicity of exposition, we assume that each of $E_h$, $E_v$ and $\sings$ is a collection of isolated points. In particular, $B_{\delta}$ does not have a vertical or horizontal segment, which means that there is no arc of one curve that lies at a constant distance $\delta$ from a point on the other. If there are such circular arcs in any of the three sets, then in the following we represent each such arc by a single point lying on it, distinct from others in the sets.  

For a point $z\in E_h$ ($E_v$), we fix the cell in $\freespacediagram_{\delta}$ containing $z$, and trace the vertical (horizontal) line incident to $z$ inside this cell. For every point $z_s\in \sings$, similarly trace a horizontal line incident to $z_s$. 
%(or a horizontal line in case $B$ contains vertical segment at the point of $\sings$). 
The result is a refinement of each cell of $\freespacediagram_\delta$ into a collection of \textbf{subcells} $\{S\}$. Figure~\ref{fig:freespacedecomp} shows an illustration of the refinement inside a cell of $\freespacediagram_\delta$.
\begin{lemma}\label{lem:montonearcs}
The boundary $B_{\delta}$ of the free space in the interior of each subcell in $\{S\}$ is a union of smooth arcs that are monotone in both coordinates of $\RR^2$ and are disjoint except possibly at the boundary of a subcell.
\end{lemma}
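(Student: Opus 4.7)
The plan is to reduce the lemma to two observations about level sets of smooth functions. Consider the function $F(x,y) = \|\curve_1(x)-\curve_2(y)\|^2 - \delta^2$, which is $C^2$ on a neighborhood of $B_\delta$ in the interior of each cell of $\freespacediagram_\delta$ since $\curve_1,\curve_2$ are piecewise $C^2$ and $\delta>0$. On such a cell, $B_\delta$ is precisely the zero set $\{F=0\}$. The key bookkeeping step is to observe that, by the construction of the subcell refinement, every point of $E_h\cup E_v\cup\sings$ lies on one of the vertical or horizontal cuts and hence on the boundary of some subcell, so the open interior of any subcell $S$ is disjoint from $E_h\cup E_v\cup\sings$.

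Given this, I would first establish smoothness and monotonicity of the arcs inside $S$. Since $\mathrm{int}(S)$ is disjoint from $\sings$, every point of $B_\delta\cap\mathrm{int}(S)$ is a regular point of $F$, and by the implicit function theorem $B_\delta\cap\mathrm{int}(S)$ is a disjoint union of $C^2$ arcs. At any regular point, the tangent direction of $B_\delta$ is proportional to $(-\partial_y F,\partial_x F)$, so it is horizontal iff $\partial_x F = 0$ (placing the point in $E_h$) and vertical iff $\partial_y F = 0$ (placing it in $E_v$). Because $\mathrm{int}(S)$ avoids $E_h\cup E_v$, along each arc the arc-length velocity $(\dot x,\dot y)$ is continuous with both components nonvanishing, and by connectedness of the arc the signs of $\dot x$ and $\dot y$ remain constant. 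This is exactly monotonicity in both coordinates.

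For disjointness I would argue by contradiction: suppose two distinct smooth arcs of $B_\delta$ met at a point $p\in\mathrm{int}(S)$. A transverse meeting is directly a self-intersection of $B_\delta$, while a tangential meeting forces $\nabla F(p)=0$, since otherwise the implicit function theorem would exhibit a unique local branch at $p$. In either case $p$ is a point where $B_\delta$ fails to be $C^1$, so $p\in\sings$, contradicting $p\in\mathrm{int}(S)$.

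The main obstacle I anticipate is a technicality rather than a conceptual difficulty: one must interpret ``smooth arcs'' as the maximal connected components of $B_\delta\cap\mathrm{int}(S)$ and show that they extend continuously up to $\partial S$ without accumulating. For the algebraically bounded curves that are the paper's main target, this follows because $B_\delta$ is a real algebraic set of bounded degree and $\overline S$ is compact, so finitely many arcs are produced and each terminates on $\partial S$, which is exactly where the lemma allows arcs to touch.
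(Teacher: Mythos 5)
Your proof takes essentially the same approach as the paper's: both observe that the subcell construction confines $E_h$, $E_v$, and $\sings$ to subcell boundaries, so the arcs of $B_\delta$ inside a subcell interior are regular (hence smooth) and have no horizontal or vertical tangents (hence monotone), and both dispose of intersecting arcs by noting that such a meeting point would be a singular point of $B_\delta$. The only difference is bookkeeping---the paper locates an extremum on a hypothetical non-monotone arc via the intermediate value theorem, whereas you invoke the implicit function theorem and sign-constancy of $(-\partial_y F,\partial_x F)$---and your explicit remarks on disjointness and on arcs terminating at $\partial S$ simply spell out what the paper dispatches in a single line.
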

\begin{proof}[Proof of Lemma~\ref{lem:montonearcs}]\label{proof:lem:monotonearcs}
Consider a curve $\beta$ representing a connected component of $B_{\delta}$ inside the interior of a subcell $\subcell$. Observe that $\beta$ cannot intersect another arc of $B_{\delta}$ in the interior of $\subcell$ and that $\beta$ is at least $C^1$, as the singular points of $B_{\delta}$ are confined to the boundaries of the subcells in $\{S\}$. 

If $\beta$ is not monotone in either coordinate, consider the case that $\beta$ starts on the bottom edge of $\subcell$ and is not vertical and thus locally given as a function of $x$. If $\beta$ is not monotone in the $x$-coordinate, then there are $x_1\neq x_2$ such that $\frac{d\beta(x_1)}{dx}=\beta'(x_1)>0>\beta'(x_2)$. By continuity, there is a point $x^*\in (x_1,x_2)$ with $\beta'(x^*)=0$, whence $x^*\in E_h$. By construction, $x^*$ then lies on the vertical boundary of $\subcell$. Other cases for $\beta$ can be treated similarly. 
\end{proof}
%The proof of Lemma~\ref{lem:montonearcs} can be found in Appendix~\ref{proof:lem:monotonearcs}.

We record each intersection $\intersections_{\subcell}$ of $B_{\delta}$ with the boundary of each subcell $\subcell$, which together form the set $\intersections=\bigcup_{\subcell \text{ is subcell}}\intersections_{\subcell}$ of all intersections of subcell walls with $B_{\delta}$.

The construction of $\intersections$ consists of finding the intersections of a horizontal or vertical line in $\freespacediagram_{\delta}$ with $B_{\delta}$, which amounts to computing the intersections of a given smooth piece of $\curve_1$, with a sphere of radius $\delta$ centered at a point of $\curve_2$ (and vice versa, switching the roles of $\curve_1$ and $\curve_2$). The associated values for $(x,y)$ in $\freespacediagram_{\delta}$ correspond to the parameter values for the computed points on each curve. 

In the following, we impose the general position assumption that $\delta$ is such that $\sings=\emptyset$ and postpone a justification for this assumption to Section~\ref{sec:singularitystructure}.
The boundary $B_{\delta}$ can be naturally interpreted as a graph $G_{\delta}$ with vertex set $\intersections$. Each edge of $G_{\delta}$ is a monotone arc contained in a subcell, by Lemma~\ref{lem:montonearcs}. In the remainder of this section we focus on elucidating two crucial observations regarding $G_{\delta}$. The first is that the \textbf{combinatorial structure} of $G_{\delta}$, i.e., which points of $\intersections_{\subcell}$ are connected by monotone edges in $B_{\delta}$, can be deduced from the set $\intersections_{\subcell}$ along with what we call slope information at certain points, which is readily computable additional information which we introduce more carefully below. The second observation is that the answer to the decision problem does not depend on the monotone arcs joining points on subcell walls. In other words, the combinatorial structure of $G_{\delta}$ is sufficient to answer the decision problem, which justifies referring to $G_{\delta}$ as a combinatorial model for $\freespacediagram_{\delta}$. We elaborate on the necessary adjustments to the existing framework for the decision algorithm for polygonal curves in more detail in Subsection~\ref{sec:monotonealgo}.

To begin with, we partition the two sets $E_h$ and $E_v$ into the sets $E_h^+$ and $E_h^-$, and $E_v^+$ and $E_v^-$, respectively, according to whether the forbidden region lies locally to the right of or above the point ($-$), or to the left of or below the point ($+$). 
For the bottom and left edge of each subcell $\subcell$, we refer to the information of whether the boundary $B_{\delta}$ at each point in $\intersections_{\subcell}$ is increasing or decreasing as a function of the horizontal $x$-coordinate as the \textbf{slope information} of these points. In other words, the slope information at a point $z\in \intersections_{\subcell}$ can be thought of an extra bit associated to $z$ that 
encodes whether $B_{\delta}$ curves to the left or to the right at $z$. 
In Figure~\ref{fig:freespacedecomp}, the slope information is illustrated by an arrow pointing to the left or right (top or bottom) next to the point on the bottom (left) edge of $\freespace_{\delta}$. 

\begin{figure}[t]
  \centering
   \includegraphics[width=\textwidth]{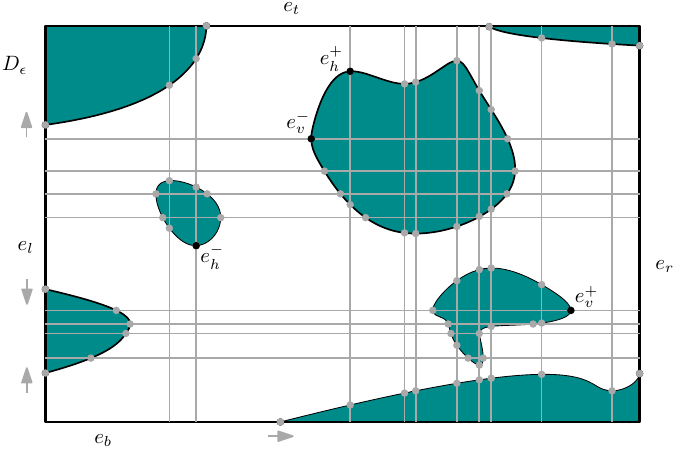}
      \caption{The decomposition of a cell of the free space diagram into subcells arising from the horizontal and vertical lines at extremities of the cyan forbidden region in the coordinate directions.}\label{fig:freespacedecomp}
\end{figure}

The following somewhat surprising statement is the main result of this section and shows that the slope information on the bottommost and leftmost edges of the original cells of $\freespacediagram_{\delta}$ leads to a construction recipe for the combinatorial structure of $G_{\delta}$.
%The following somewhat surprising result illustrates a key motivation for considering the above sets of points involved in the decomposition of $\freespacediagram_{\delta}$ into the collection $\{C_{i,j}\}$, and also explains why we refer to the decomposition as a combinatorial model of $\freespacediagram_{\delta}$. 
\begin{lemma}\label{lem:reconstructfromextrema}
Assume $\delta$ is such that $\sings=\emptyset$. There is an algorithm that reproduces the combinatorial structure of $G_{\delta}$, using the sets $E_h^+,E_h^-, E_v^+,E_v^-$, and $\intersections$ along with the slope information on the bottommost and leftmost edges of $\freespacediagram_{\delta}$, in time $O(\lvert \intersections\rvert)$.
\end{lemma}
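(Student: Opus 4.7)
The plan is to sweep over the subcells of $\freespacediagram_{\delta}$ in lexicographic order---row by row from bottom to top, left to right within each row, and analogously for subcells of a common cell---maintaining the invariant that when a subcell $\subcell$ is reached, the slope sign of $B_{\delta}$ at every intersection point on the bottom and left edges of $\subcell$ is already known, either from the initial input on the bottommost and leftmost edges of $\freespacediagram_{\delta}$ or by propagation from previously processed neighbors of $\subcell$. At each $\subcell$ we carry out two tasks: (i) determine the pairing of $\intersections_{\subcell}$ by arcs of $B_{\delta}$ inside $\subcell$, and (ii) derive the slope signs on the top and right edges of $\subcell$ for further propagation.

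For (i), I would invoke Lemma~\ref{lem:montonearcs}: every arc of $B_{\delta}$ inside $\subcell$ is monotone in both coordinates with a constant slope sign, and arcs of the same slope sign are mutually non-crossing. Positive-slope arcs connect ``lower-left'' endpoints (on the bottom or left edges of $\subcell$) to ``upper-right'' endpoints (on the top or right edges); negative-slope arcs connect ``upper-left'' endpoints (on top or left) to ``lower-right'' endpoints (on bottom or right). Since the set bottom $\cup$ left and the set top $\cup$ right each form a contiguous arc of $\partial \subcell$, the non-crossing bipartite matching in each slope family is uniquely determined by the reverse-order pairing of the endpoints along $\partial \subcell$. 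Once the slope labels on the entire boundary of $\subcell$ are known, the matching can be read off by a single sweep in $O(\lvert \intersections_{\subcell} \rvert)$ time.

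For (ii), the slope sign at each matched endpoint on the top or right edge of $\subcell$ is immediately inherited from the slope sign of its partner on the bottom or left edge, by constancy of slope along each monotone arc. These slopes are then propagated across the shared walls to the adjacent subcells. At a regular subcell wall (not containing an extremum), $B_{\delta}$ is locally $C^1$, so the slope sign at each intersection on the wall coincides on both sides. At a vertical cut through an $E_h$ point $z$, the slope of $B_{\delta}$ at $z$ itself flips sign across the wall, and the specific signs are determined directly from the classification of $z$: for $z \in E_h^-$, $B_{\delta}$ has a local minimum at $z$, so the slope is negative immediately to the left of $z$ and positive immediately to the right; for $z \in E_h^+$ the opposite holds. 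Analogous rules apply at $E_v$ points on horizontal cut walls. Since each intersection point lies on the boundary of at most two subcells and is touched a constant number of times, the total running time is $O(\lvert \intersections \rvert)$.

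The hardest part will be to establish that the pairing inside $\subcell$ is uniquely recoverable from the slope information on the bottom and left edges of $\subcell$ alone---without direct prior knowledge of the slopes on the top and right edges. A stack-based sweep along $\partial \subcell$ handles this: intersections on the bottom and left are classified as ``open'' or ``close'' chord endpoints via their slope and the edge they lie on, while intersections on the top and right are classified greedily by closing the chord currently at the top of the stack. Since the data comes from an actual realizable $B_{\delta}$, the non-crossing constraint forces a unique choice at each step, yielding the correct matching, and the slopes inferred on the top and right edges are then propagated forward. A related subtlety is the propagation across cell boundaries of $\freespacediagram_{\delta}$ where $\gamma_1$ or $\gamma_2$ may have breakpoints, but under the general position assumption that no hidden extremum lies on such a boundary, these walls can be handled as regular subcell walls.
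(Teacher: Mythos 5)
Your sweep order, the use of Lemma~\ref{lem:montonearcs}, and the propagation of slope signs across subcell walls via the $E_h^{\pm}$/$E_v^{\pm}$ classification are all in line with the paper's proof. The genuine gap is in the within-subcell matching: the greedy rule ``close the chord currently at the top of the stack'' for top and right points is not correct, and the claim that the non-crossing constraint forces a unique choice at each step is false.

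Consider a subcell $\subcell$ with a single point $b_1$ on $e_b$ with positive slope, no points on $e_l$, a single point $t_1$ on $e_t$, and two points $r_1$ (lower) and $r_2$ (upper) on $e_r$; the arcs are $b_1\to r_1$ (monotone increasing) and $t_1\to r_2$ (monotone decreasing), which is a perfectly realizable piece of $B_{\delta}$ and is consistent with all the slope data on $e_b\cup e_l$. When your sweep reaches $t_1$, the top of the stack is $b_1$, whose positive slope is compatible with $b_1$ matching either a top point or a right point, so locally nothing prohibits closing. If the greedy closes, the algorithm outputs $b_1$--$t_1$ and is then stuck with two unmatchable right points. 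The correct decision to leave $b_1$ open requires a global count, not the top of the stack; this is exactly what the paper's ``Last case'' handles via the balance condition in~\eqref{eq:simplerule} (and, at corners, \eqref{eq:simplerule2}). Your proposal needs an analogous counting criterion to disambiguate when the stack top's slope is compatible with both closing now and closing later; without it, the algorithm can go wrong. A secondary, smaller issue is the claim that bottom/left points can be classified as open/close ``via their slope and the edge they lie on'': a left-edge point with negative slope can either close a bottom point already on the stack or open toward $e_r$, so this classification already needs the stack, not just slope and edge. Finally, the assertion that each slope family's matching is determined by a reverse-order pairing within that family ignores that the two families' bipartitions of $\partial\subcell$ differ and must jointly be non-crossing, and that on $e_t\cup e_r$ one does not initially know which slope family a point belongs to.
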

\begin{proof}
By Lemma~\ref{lem:montonearcs}, each subcell $\subcell$ contains only arcs that are monotone in both coordinates.  
%Moreover, we know whether the curve describing the boundary $B_{\delta}$ inside $\freespace_{\delta}$ is increasing or decreasing to either side of a point in any of the sets $E_h^+,E_h^-, E_v^+$ or $E_v^-$. 
We show that the slope information at the points in $\intersections_{\subcell}$ that lie on the lower and left edge of $\subcell$ is sufficient to know how they are connected through arcs in $B_{\delta}$ inside $\subcell$. Transferring the slope information along arcs, we can thus find the structure of $G_{\delta}$ in all subcells incrementally, starting from the bottom left, where after every step there is a (new) subcell in $\{\subcell\}$ where the slope information is known on the bottom and left edges. 

Assume first for simplicity that there are no points on the corners of $\subcell$. We remove all points in $\intersections_{\subcell}$ (corresponding to points in $E_v$ or $E_h$) %, or $\sings$ 
that only touch $\subcell$, so that $B_{\delta}$ does not enter $\subcell$.  Each point $z\in \intersections_{\subcell}$ then has an incident arc that enters the interior of $\subcell$, so every point in $\intersections_{\subcell}$ has to be paired with another such point. To find how they are matched, the algorithm proceeds incrementally, deducing at least one set of matching points at every step. 

We consider a number of cases, depending on the behavior of $B_{\delta}$ at the as-of-yet unmatched points closest to the corners. Denote by $e_b,e_r,e_t$ and $e_l$ the bottom, right, top, and left edge of $\subcell$. We further write $z_l^b$ and $z_r^b$ for the two unmatched points in $\intersections_{\subcell}$ on $e_b$ that are farthest apart, i.e., for the left-most and right-most unmatched points, respectively. We similarly define the points $z_b^l,z_t^l,z_l^t,z_r^t,z_t^r$ and $z_b^r$ on the edges $e_l,e_t,$ and $e_r$, as illustrated in Figures~\ref{fig:casetype1} and~\ref{fig:examplescases}. The combinations of the slope information at the points $z_l^b,z_r^b, z_b^l$, and $z_t^l$ on the left and bottom edges yields a total of $2^4=16$ cases to consider, where we first assume that these 4 points are distinct. We will show how every case leads to a unique insertion of an arc of $B_{\delta}$ matching at least two vertices of $G_{\delta}$, at which point we restart the process to insert another arc until every point is matched.

\subparagraph*{Case type 1}\label{subsec:case1}

Consider the case where the arcs of $B_{\delta}$ at $z_l^b$ and $z_r^b$ move toward each other as in Figure~\ref{fig:casetype1}(left). Then there is no arc in $B_{\delta}$ that connects any point on $e_l$ to $e_r$ and thus $z_t^l$ necessarily connects to $z_l^t$. Since the same argument holds if the arcs of $B_{\delta}$ at $z_t^l$ and $z_b^l$ curve toward each other as in Figure~\ref{fig:casetype1}(right), this resolves a total of $7$ different possibilities. 

\begin{figure}[t]
\centering
  \begin{subfigure}[t]{0.49\textwidth}
  \centering
   \includegraphics[page=3]{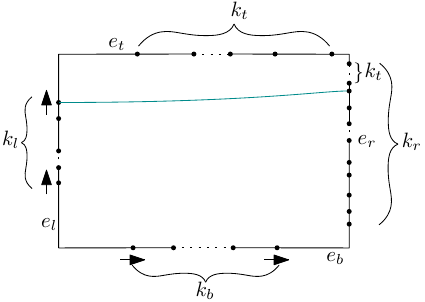}
  \end{subfigure}
    \begin{subfigure}[t]{0.49\textwidth}
  \centering
   \includegraphics[page=4]{freespacecasesz.pdf}
  \end{subfigure}
      \caption{slope information - case type 1}\label{fig:casetype1}
\end{figure}

\subparagraph*{Case type 2}\label{subsec:case2}
Let $B_{\delta}$ be decreasing at $z_l^b$ and at $z_b^l$. Since each arc is monotone, $z_l^b$ cannot be joined to any point lying above $z_b^l$, so these two points have to be matched. This takes care of $4$ further cases.  

\begin{figure}[t]
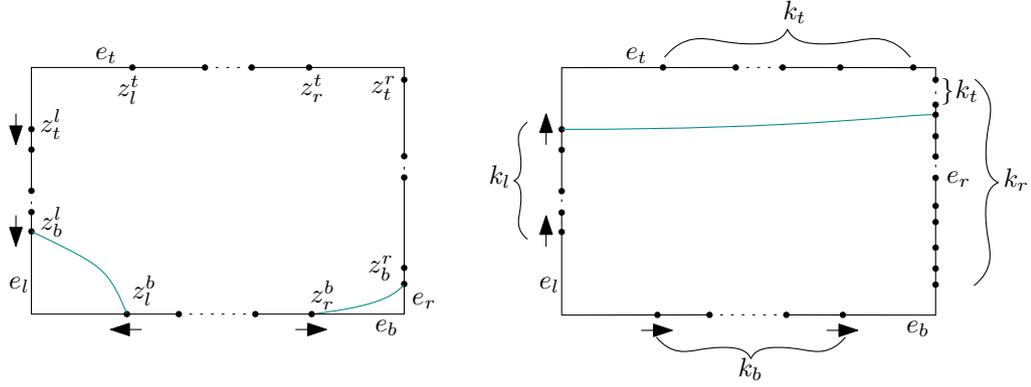

\centering
  \begin{subfigure}[t]{0.49\textwidth}
  \centering
   \includegraphics[page=2]{freespacecasesz.pdf}
  \end{subfigure}
 \begin{subfigure}[t]{0.49\textwidth}
 \centering
    \includegraphics[page=1]{freespacecasesz.pdf}
  \end{subfigure}
      \caption{Examples of different cases.}\label{fig:examplescases}
\end{figure}
Consider next the case where $B_{\delta}$ is decreasing at $z_l^b$. The case of $B_{\delta}$ decreasing at $z_b^l$ having already been treated above, assume otherwise, so that $z_l^b$ does not connect to a point on $e_l$. Moreover, none of the points on $e_l$ can connect to points on $e_r$ or $e_b$, so they have to be joined to points on $e_t$. Therefore, if $k_l$ is the number of unmatched points on $e_l$, $z_l^b$ connects to the $(k_l+1)$-th point on $e_t$ (and the leftmost $k_l$ points on $e_t$ get joined to points on $e_l$). The case where $B_{\delta}$ decreases at $z_b^l$ is analogous, concluding a total of 15 cases. 

\subparagraph*{Last case}\label{subsec:case3}
The only remaining case not yet treated is the case where $B_{\delta}$ is increasing at each of the four points in question, as depicted in Figure~\ref{fig:examplescases}(right). There are two possibilities. If $z_t^l$ is not joined by an arc to $z_l^t$, then it must be joined to the $(k_t+1)$-th point $z_{k_t+1}^r$ on $e_r$ (counted from the top), where $k_t$ is the number of unmatched points in $\intersections_{\subcell}$ on $e_t$. 

The two cases can be distinguished by a simple rule. Let $k_b$ and $k_r$ denote the number of unmatched points on $e_b$ and $e_r$, respectively. 
If $z_t^l$ is connected to the $(k_t+1)$-th point on $e_r$, then all of the unmatched points in $\intersections_{\subcell}\cap e_t$ have to be connected to $e_r$ and no point on $e_l$ can be joined to $e_t$. Since no point on $e_l$ connects to $e_b$ because of the slope, we find that 
\begin{align}\label{eq:simplerule}
k_l+k_b+k_t=k_r.
\end{align}
%i.e., the points in $\intersections_{\subcell}$ on $e_b,e_l$ and $e_t$ add up to those on $e_r$.  
Since every point has to be joined,~\eqref{eq:simplerule} alone also implies that no point on $e_l$ can be joined to $e_t$, distinguishing the two cases.

After having matched all points in $\intersections_{\subcell}$, we can transfer the slope information from points on the bottom and left edges to the points on $e_t$ and $e_r$. Note for this that the behavior does not change for points that are in neither of $E_h^+,E_h^-, E_v^+,$, or $E_v^-$, %or $\sings$, 
for which the slope at the corresponding points is clear. %(and part of the slope information for $\sings$). 
Lastly, the remaining points on $e_r$ and $e_t$ without slope information that have been matched to each other are marked as decreasing.

To conclude the proof, we next deal with degenerate cases and corner points. %in Appendix~\ref{app:degeneratecases}, concluding the proof.
\subparagraph*{The degenerate cases}
Consider now the cases where there are no four pairwise distinct points $z_l^b,z_r^b, z_b^l$, and $z_t^l$. Notice first that the above analysis already treats the case where there are only three distinct such unmatched points. Also, none of the above cases require the slope information for all four points, so the case study also applies in this situation. 

The next case is that of only two distinct extremal points on $e_b\cup e_l$, which again can be treated very similarly. Likewise, the case of a single point on $e_b\cup e_l$ can be settled in the same way. Last, consider the undiscussed situation where all points on $e_b$ and $e_l$ have been matched. In this case, the unmatched points on $e_t$ have to be matched with those of $e_r$.
 
\subparagraph*{Corner points}
To conclude the proof that the points in $\intersections_{\subcell}$ can be uniquely joined in $\subcell$, we still need to discuss the case where there are points in $\intersections_{\subcell}$ on the corners of $\subcell$. For corner points on $e_b$ or $e_l$, the slope information tells us whether or not to include these points in $\intersections_{\subcell}$ for the matching process. %If $B_{\delta}$ curves away from $\subcell$ at a given marked corner point, then this marked point is removed from $\intersections_{\subcell}$. 
To check if a point in the upper right hand corner $c_{tr}:=e_t\cap e_r$ needs to be considered in the absence of slope information for that corner point, we first observe that none of the points on $e_t$ or $e_r$ can connect to $c_{tr}$. 
%We discuss the general case that $c_{tr}$ may need to be added multiple times to $I_{C_{i,j}}$ and assume that 
If $c_{tr}$ connects to some other point in $\intersections_{\subcell}$, then no point on $e_r$ can connect to $e_t$. Hence, we see that  
\begin{align}\label{eq:simplerule2}
k_l+k_b-2n_{bl}= k_t+k_r,
\end{align}
where $n_{bl}$ denotes the number of points on $e_b$ that connect to a point on $e_l$. The quantities $k_l,k_b,k_t,k_r$ are defined as above, %as the numbers of points in $I_{C_{i,j}}$ on the corresponding edges of $C_{i,j}$ 
with the convention that the top left and bottom left corner only counts for $k_l$, the bottom right corner for $k_b$, and $c_{tr}$ for $k_t$. Note that the points in any pair of matched points in $e_l\cup e_b$ are incident to a decreasing arc of $B_{\delta}$. Therefore, none of them can connect to $c_{tr}$. Furthermore, no such pair of points can be separated by an arc from $c_{tr}$. Thus, \eqref{eq:simplerule2} can only be valid for a specific value of $k_t$.%, which can be adjusted by adding copies of $c_{tr}$ to $I_{C_{i,j}}$. 

In case $c_{tr}$ is not part of $\intersections_{\subcell}$, let $n_{tr}\ge 0$ denote the number of points on $e_t$ that are connected to a point on $e_r$. Then $
k_l+k_b-2n_{bl}= (k_t-1)+k_r-2n_{tr}$,
and we see that the right hand side decreases with each pair of matched points from $e_t$ and $e_r$. All in all, we see that to check if $c_{tr}$ belongs to $\intersections_{\subcell}$, one calculates the left and right hand side of \eqref{eq:simplerule2}. Their difference determines if $c_{tr}$ belongs to $\intersections_{\subcell}$.
\end{proof}

\begin{remark}
Figure~\ref{fig:derivatives} illustrates the necessity of some knowledge of the slope information on edges of a subcell for the accurate reconstruction of $B_{\delta}$ inside a cell. 
\begin{figure}[t]
  \centering
   \includegraphics[width=0.9\textwidth]{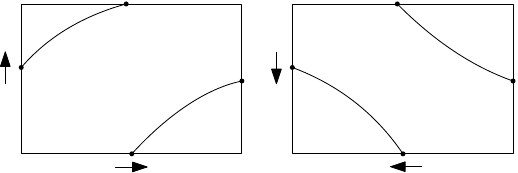}
      \caption{Two different sets of slope information and their combinatorial structures in a subcell.}\label{fig:derivatives}
\end{figure}
\end{remark}
\subparagraph*{Deriving the slope information} 
 In the following, we explain how to derive the slope information for points in a subcell by evaluating the distance between the two curve segments at specific \textbf{test points} along their parametrization. The process is illustrated in Figure~\ref{fig:testpoints}.  
We explain the approach using the example of an extremal point $z\in E_h$, which belongs to either $E_h^+$ or $E_h^-$, depending on the distance evaluated at three test points. The three points are any points lying, sufficiently close, below, above, and to the left (or right) of $z$. Two of the three points either both belong to the free space or both to the forbidden region, and one will not, determining whether $z\in E_h^+$ or $z\in E_h^-$, as illustrated in Figure~\ref{fig:testpoints}. A similar analysis can be used to determine the slope information associated to other points $z\in I$ (as long as $z\not\in \sings$).  %When $z$ is a cusp, the slope information can be gleaned from inspecting the gradient directly. Alternatively, for a cusp $z$, one can also check if the points on the horizontal line through $z$ to either side of $z$ both belong to the free space or both to the forbidden region. If this is not the case, each of the two cells incident to $z$ gets one copy of $z$. For the copy that is on the bottom edge of a cell $C_{i,j}$, we can check if a point above $z$ belongs to forbidden region or not to deduce the associated slope information. Otherwise, $z$ is a local minimum or maximum of $B_{\delta}$ (in $y$ direction), and we can distinguish the two cases by a conceptual sweep over $\freespacediagram_{\delta}$ as the number of intersections of a horizontal line with $B_{\delta}$ changes when crossing an extreme point.

Since $z$ lies on the (vertical) boundary of a cell, at least two of the points have natural candidates: We can choose any point on the boundary of the cells that is closer to $z$ than any other intersection point with $B_{\delta}$ lying on the same (vertical) line. 
To guarantee that the chosen point in the horizontal direction is sufficiently close to $z$ for the analysis to work, we use the same machinery as before in the construction of the cell decomposition. Denoting $z=(z_x,z_y)$, let $z^*$ be the first number smaller than $z_x$ where $\curve_1$ (parametrized by the horizontal axis) intersects the ball of radius $\delta$ centered at $\curve_2(z_y)$. Then the point of evaluation is chosen from the open interval $(z^*,z_x)$.

\begin{figure}[t]
  \centering
   \includegraphics{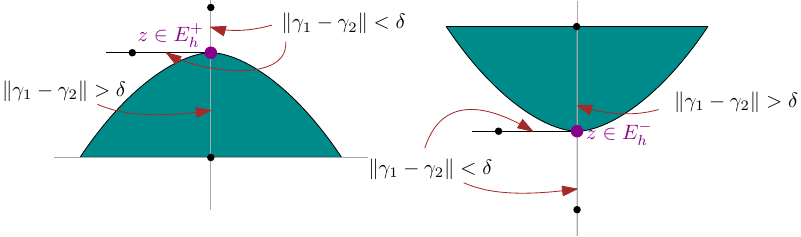}\\
   \includegraphics{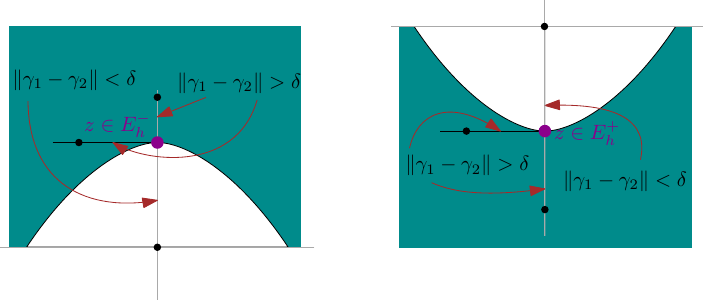}
      \caption{Finding slope information by evaluating the distance at points.}\label{fig:testpoints}
\end{figure}

\subsection{Monotone paths in the free space diagram and the decision problem}\label{sec:monotonealgo}

In this section, we use the machinery developed above to derive an algorithm that answers the decision problem. % for a given $\delta$ in general position. 
\begin{definition}\label{def:relativefreespace}
The \textbf{reachable free space} $\rfreespace_{\le\delta}(\curve_1,\curve_2)$ of two curves $\curve_1$ and $\curve_2$ is the subset of all points of the free space $\freespace_{\delta}(\curve_1,\curve_2)$ that are reachable from the origin by a path monotone in both coordinates.
%The \textbf{complexity} of the free space is the number of grid cells that contain non-empty free space.
The \textbf{complexity} $N_{\le \delta}(\curve_1,\curve_2)$ of the reachable free space is the number of original cells with non-empty intersection with $\rfreespace_{\le\delta}(\curve_1,\curve_2)$. 
\end{definition}
As before, we can assume (by Proposition~\ref{prop:algebraicregvalues} below) that $\delta$ is such that there are no singularities on the boundary curves $B_{\delta}$ of the free space. %, except for possibly cusps in case of the $\ell_1$- or $\ell_{\infty}$-norm. 
We will later show (Proposition~\ref{prop:finitepointsfreespace}) that for algebraically bounded curves the decomposition of $\freespacediagram_{\delta}$ and point set in Lemma~\ref{lem:reconstructfromextrema} (corresponding to the vertices of the graph $G_{\delta}$) consists of $O(mn)$ elements (depending on the allowed degrees of the curves). %This set of marked points corresponds to the set of vertices of the graph $G_{\delta}$. 
To obtain bounds on the running time of the algorithm, in the following we assume that the investigated curves are algebraically bounded.
A \textbf{reachable interval} $R$ is a maximal interval of points on a boundary edge of a subcell, such that every point in $R$ is reachable from the origin by a path contained in the free space, that is monotone in both coordinates. In particular, horizontal (vertical) reachable intervals always end on the right (top) either at a vertex of $G_{\delta}$, or at a cell wall, so there are at most $O(mn)$ reachable intervals. Consider, for example, a reachable interval $R$ on the bottom edge of a cell and its left-most point $b_l$. Points reachable from $b_l$ clearly include all points reachable from any point to the right of $b_l$ within the same interval of free space on the edge. %The number of reachable intervals in a cell is bounded by Proposition~\ref{prop:finitepointsfreespace}, so 
By Lemma~\ref{lem:montonearcs}, there is at most one reachable interval on each of the top and right edges of a subcell $\subcell$ that is reachable by a monotone path starting from a reachable interval on the left (or bottom) edge of $\subcell$, as illustrated in Figure~\ref{fig:reachableintervals}. The set of reachable intervals on the top and right edges of $\subcell$ can be readily computed in constant time from the coordinates of the vertices of $G_{\delta}$ along with its combinatorial information. %By monotonicity, there is at most one interval on each of the top and right edges of a cell that are reachable by a monotone path starting from a reachable interval on the left (or bottom) edge of that cell. %The points reachable by monotone paths within each cell can be readily identified.  

\begin{figure}[t]
  \centering
   \includegraphics[width=0.9\textwidth]{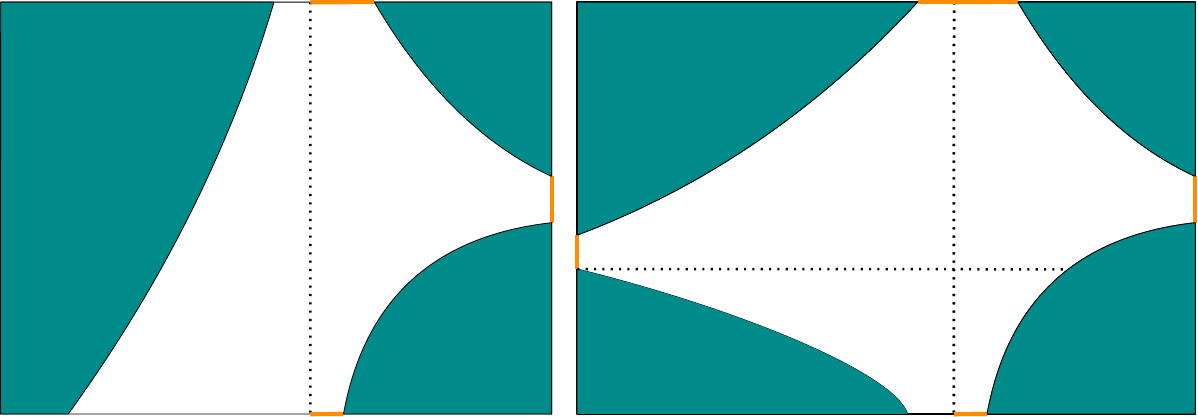}\\
      \caption{Computing the reachable intervals (in orange) on the top and right edges of a cell.}\label{fig:reachableintervals}
\end{figure}

We obtain a graph representation of the reachable intervals as follows. Every reachable interval corresponds to a vertex and two vertices are joined by an edge if both intervals belong to the same subcell and every point of one of the reachable intervals can be joined by a monotone path to some point on the other interval. To construct the resulting graph $\mathcal{R}$, consider the original cell decomposition of $\freespacediagram_{\delta}$.
We start by checking if $(0,0)\in \freespace_{\delta}$ and then iterate through the original cells from the bottom left in every row and subsequently move up row by row, computing the reachable intervals on the top and right boundary edges of each cell at every step. To compute these reachable intervals, we compute the reachable intervals on the constant number of subcells in each cell in the same fashion, processing a total of $O(N_{\le \delta}(\curve_1,\curve_2))$ cells. Observe that all reachable intervals have $x$ and $y$ coordinates that are each already present in the coordinates of the marked points, i.e., vertices of $G_{\delta}$. (We can also compute the combinatorial structure of $G_{\delta}$ within a subcell while processing it, as described in the proof of Lemma~\ref{lem:reconstructfromextrema}.) All in all, we obtain the following result.

\begin{proposition}\label{prop:decisionproblempath}
Given two algebraically bounded piecewise smooth curves $\curve_1$, $\curve_2$ in $\RR^d$ comprised of $m$ and $n$ pieces, respectively, and a value of $\delta$ such that $B_{\delta}$ has no singularities, one can decide if $\distf(\curve_1,\curve_2)\le\delta$. The running time is bounded by $O(mn)$.
\end{proposition}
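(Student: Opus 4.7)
The plan is to combine Lemma~\ref{lem:reconstructfromextrema} with a standard reachability sweep through the subcells of $\freespacediagram_{\delta}$ in the spirit of the polygonal decision algorithm of Alt and Godau. The decision is affirmative iff $(1,1)$ lies in $\rfreespace_{\le\delta}(\curve_1,\curve_2)$, equivalently, iff $(1,1)$ is contained in a reachable interval produced by the sweep.

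First, I would compute the sets $E_h^{\pm}$, $E_v^{\pm}$, and $\intersections$, together with the slope information on the bottommost and leftmost edges of $\freespacediagram_{\delta}$. Each such point is obtained from a constant-complexity algebraic operation (intersecting a smooth piece of one curve with a sphere of radius $\delta$ centered at a point of the other, or locating the zeros of a univariate derivative). By Proposition~\ref{prop:finitepointsfreespace}, there are $O(mn)$ such points, and the slope information at each of them is derived in $O(1)$ time using the test-point scheme illustrated in Figure~\ref{fig:testpoints}. Hence the preprocessing takes $O(mn)$ time.

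Next, I would sweep through the original cells in lexicographical order (row by row from bottom to top, each row from left to right), and within each cell process its subcells in lexicographical order. For each subcell $\subcell$, I invoke Lemma~\ref{lem:reconstructfromextrema} to recover, in $O(|\intersections_{\subcell}|)$ time, the matching of points in $\intersections_{\subcell}$ by arcs of $B_{\delta}$ inside $\subcell$; by Lemma~\ref{lem:montonearcs}, there is at most one reachable interval on each of the top edge $e_t$ and the right edge $e_r$ of $\subcell$, and both are determined in $O(|\intersections_{\subcell}|)$ time from this matching together with the reachable intervals on $e_b$ and $e_l$ propagated from previously processed subcells. The algorithm returns YES iff $(0,0)\in \freespace_{\delta}$ and $(1,1)$ lies in a reachable interval on the top or right edge of the top-rightmost subcell. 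The total running time is dominated by $\sum_{\subcell}|\intersections_{\subcell}|=O(|\intersections|)=O(mn)$, using that each marked point bounds $O(1)$ subcells.

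The main obstacle is formalizing the claim that the reachable intervals on $e_t$ and $e_r$ depend only on the combinatorial data of $G_{\delta}$ restricted to $\subcell$ and on the reachable intervals on $e_b\cup e_l$. This rests on Lemma~\ref{lem:montonearcs}: since each component of $B_{\delta}\cap \subcell$ is a monotone arc with endpoints in $\intersections_{\subcell}$, the free space $\freespace_{\delta}\cap \subcell$ decomposes into finitely many regions whose incidences with $e_b,e_l,e_t,e_r$ are encoded by the matching produced by Lemma~\ref{lem:reconstructfromextrema}; a case analysis analogous to the polygonal situation then yields the reachable intervals on $e_t$ and $e_r$ in time linear in $|\intersections_{\subcell}|$. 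The remaining bookkeeping mirrors the Alt--Godau algorithm and is routine.
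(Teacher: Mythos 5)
Your proposal follows essentially the same route as the paper: preprocess to obtain the $O(mn)$ marked points (via Proposition~\ref{prop:finitepointsfreespace}) and the slope information on the outermost edges, then sweep the cells and subcells in lexicographic order, using Lemma~\ref{lem:reconstructfromextrema} to recover the local matching and Lemma~\ref{lem:montonearcs} to propagate at most one reachable interval per subcell edge, concluding in $O(mn)$ time. This is precisely the paper's argument, including the per-subcell invocation of Lemma~\ref{lem:reconstructfromextrema} during the sweep, which the paper notes parenthetically.
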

\begin{remark}
In contrast to the situation for polygonal curves, our solution to the decision problem does not automatically yield a parametrization of $\curve_1$ and $\curve_2$ that realizes the given distance of $\delta$. One way of obtaining such parametrizations would be through parametrizations of the arcs of $B_{\delta}$.
\end{remark}
We note that a necessary ingredient for the above decision algorithm is being able to compare the coordinates of the points in the set of points used in the construction of the partitioning of $\freespacediagram_{\delta}$.  No further algebraic operations are required to solve the decision problem. 
Observe that the answer to the decision problem depends only on the combinatorial structure of $G_{\delta}$ and the ordering of the $x$- and $y$-coordinates of the marked points $\intersections$ that give rise to the partition. Away from critical values of $\delta$, the answer therefore does not change when applying a sufficiently small perturbation of the marked points. Furthermore, in the following section, we deduce that the number of marked points for algebraically bounded curves is bounded, and also show that the number of values of $\delta$ for which the answer to the decision problem is subject to change is bounded for such curves. In particular, we observe that the decision algorithm does not crucially depend on evaluating the required algebraic operations exactly, but can also be carried out approximately.

\section{The structure of the boundary $B_{\delta}$ of the free space}\label{sec:singularitystructure}
In this section, we complement the introduced computational approach with results concerning the structure of singular points on $B_{\delta}$ that guarantee that our algorithms work correctly. The material in this section provides the technical foundation for the approach outlined above. 
Observe that even for straight edges, in case of the $\ell_1$ or $\ell_\infty$ norm, $B_{\delta}$ can feature non-smooth points that persist when varying $\delta$, as the free space is the intersection of a parallelogram with $[0,1]^2$ (refer to~\cite{ALT1995}). Although our framework can be adjusted to include these cases, for simplicity, we exclude the values $p=1$ and $p=\infty$ from our treatment. We discuss the different scenarios for all remaining $\ell_p$ norms, starting with general smooth curves and subsequently focusing on algebraically bounded curves; see~\cite{Hartshorne1977} for background on algebraic curves and degree. The central property of an algebraic curve we use is that it cannot intersect a hyperplane in general position more than a constant number (depending on the allowed degree) of times. %\ben{We note that the presented analysis fills a gap in~\cite{Rote2007} for the analysis for the planar case. This is true, but a rather silly thing to say.} 
\begin{figure}[t]
  \centering
   \includegraphics{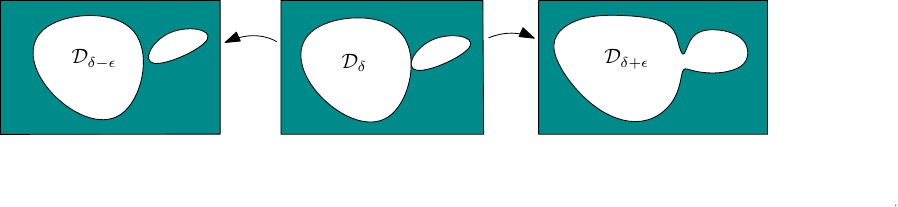}
      \caption{Illustration of singular points and changes of the boundary $B_{\delta}$ as $\delta$ changes.}\label{fig:freespaceboundary}
\end{figure}

Defining the function $f_{p,\curve_1,\curve_2}:[0,1]^2\to \RR$ by $f_{p,\curve_1,\curve_2}(t_1,t_2)=\sum_i(|(\curve_1)_i(t_1)-(\curve_2)_i(t_2)|)^p$, we have that $B_\delta=\{(t_1,t_2) |f_{p,\curve_1,\curve_2}(t_1,t_2)=\delta^p\}\subset [0,1]^2$. There are two possible types of singular points of $B_{\delta}$ --- cusps and points of self-intersection --- characterized as those points that do not have a well-defined tangent defined by the Jacobi matrix (gradient) $J:=Df_{p,\curve_1,\curve_2}$. The singular points correspond to \textbf{critical points} of the function $f_{p,\curve_1,\curve_2}$. The image of a critical point is known as a \textbf{critical value}. The entries $J_k(t_1,t_2)$ for $k=1,2$ are given by    
%\begin{align}
%J_{k}(t_1,t_2)=\frac{1}{\|\gamma_1(t_1)-\gamma_2(t_2)\|^{p-1}_p}\sum_i|(\gamma_1)_i(t_1)-(\gamma_2)_i(t_2)|^{p-2}((\gamma_1)_i(t_1)-(\gamma_2)_i(t_2))  (-1)^{k+1}(\gamma_k)'_i(t_k).
%\end{align}
\begin{align}\label{eq:jacobian}
J_{k}(t_1,t_2)=p\sum_i|(\curve_1)_i(t_1)-(\curve_2)_i(t_2)|^{p-2}((\curve_1)_i(t_1)-(\curve_2)_i(t_2))  (-1)^{k+1}(\curve_k)'_i(t_k).
\end{align}
The singular points of $B_\delta$ can thus be described as 
\begin{equation}\label{eq:boundarysingularities}
\begin{aligned}
\sings=\biggl\{(t_1,t_2) \bigg|& \sum_i(|(\curve_1)_i(t_1)-(\curve_2)_i(t_2)|)^{p-2}((\curve_1)_i(t_1)-(\curve_2)_i(t_2))  (\curve_1)'_i(t_1)=0,\\
-&\sum_i(|(\curve_1)_i(t_1)-(\curve_2)_i(t_2)|)^{p-2}((\curve_1)_i(t_1)-(\curve_2)_i(t_2))  (\curve_2)'_i(t_2)=0\biggr\}\cap B_\delta.
\end{aligned}
\end{equation}
Geometrically, critical points of $f_{p,\curve_1,\curve_2}$ for $1<p<\infty$ can be described as pairs of points $(t_1,t_2)$ on $B_\delta$ satisfying the additional criteria
\begin{align}\label{eq:boundarysingularitiesgeom}
\frac{\D}{\D t_1}\|\curve_1(t_1)-\curve_2(t_2)\|_p=0=\frac{\D}{\D t_2}\|\curve_1(t_1)-\curve_2(t_2)\|_p.
\end{align}

Observe that both equations in~\eqref{eq:boundarysingularities} (or~\eqref{eq:boundarysingularitiesgeom}) are continuous for $1< p<\infty$. The implicit function theorem guarantees that, away from critical points, $B_\delta$ can be described as a collection of planar $C^1$ curves, as illustrated in Figure~\ref{fig:freespaceboundary}.

The intuitive idea behind the investigations of this section is that since the two equations in~\eqref{eq:boundarysingularities} are independent of $\delta$, the solution set (the singular points) should ideally be a collection of isolated points, so that there are only isolated critical values of $\delta$ to which they belong. 
%We will investigate these claims in more detail in the following.

%In contrast to the defining equation of $B_\delta$, these two equations only depend on the curves $\curve_1$ and $\curve_2$ and not on $\delta$. 
Since both $\curve_1$ and $\curve_2$ are $C^2$, the function $f_{p,\curve_1,\curve_2}$ is $C^2$ for $2\le p< \infty$, and $C^1$ for $1<p<2$ (see also~\eqref{eq:hessianboundarysingularities} below). %in Appendix~\ref{app:sec:singularities}). 
For the sake of completeness, we derive the coefficients $H_{kl}(t_1,t_2)$, with $k,l\in \{1,2\}$, of the Hessian $H$ of $f_{p,\curve_1,\curve_2}$, from the Jacobian $J_k$ in~\eqref{eq:jacobian} of $f_{p,\curve_1,\curve_2}$. Up to multiplication by a constant $C$, we compute $H_{kl}$ by differentiating the equations~\eqref{eq:boundarysingularities}, which yields
\begin{equation}\label{eq:hessianboundarysingularities}
\begin{aligned}
C\cdot H_{kl}(t_1,t_2)=&\sum\limits_{i}|(\curve_1)_i(t_1)-(\curve_2)_i(t_2)|^{p-2}\biggl((p-2)(-1)^{l+k}(\curve_l)'_i(t_l)(\curve_k)'_i(t_k)+\\
&(-1)^{l+k}(\curve_k)'_i(t_k)(\curve_l)'_i(t_l)+(-1)^{k+1}\delta_{kl}((\curve_1)_i(t_1)-(\curve_2)_i(t_2))(\curve_k)''_i(t_k) \biggr)\\
=&\sum\limits_i |(\curve_1)_i(t_1)-(\curve_2)_i(t_2)|^{p-2}\biggl( (-1)^{k+l}(p-1)(\curve_l)'_i(t_l)(\curve_k)'_i(t_k)\\
&+(-1)^{k+1}\delta_{kl}((\curve_1)_i(t_1)-(\curve_2)_i(t_2))(\curve_k)''_i(t_k)\biggr),
\end{aligned}
\end{equation}
where $\delta_{kl}$ is the Kronecker delta, which has the value one for $k=l$ and zero otherwise.

\begin{example}\label{ex:infinitelysings}
In case of the Euclidean norm, where $p=2$, the singularities $(t_1,t_2)$ satisfy the orthogonality condition 
\begin{align}\label{eq:orthocrit}
\braket{\curve_1(t_1)-\curve_2(t_2),\curve_1'(t_1)}=0=\braket{\curve_1(t_1)-\curve_2(t_2),\curve_2'(t_2)}.
\end{align}
The singularities thus correspond to pairs of points, one on each of the curves, at which the two tangent vectors are contained in the hyperplane orthogonal to the vector between the pair of points.

Consider the two curves $\curve_1(t)=(t,0)$ and $\curve_2(t)=(t,t^5\sin(\frac{1}{t})+\alpha)$ for $t\neq 0$ and $\curve_2(t)=(0,\alpha)$, for some $\alpha\in \RR$. Notice that both $\curve_1$ and $\curve_2$ are $C^2$ curves. Since $(\curve_2)_2$ has infinitely many zeros in $[0,1]$, its derivative also has infinitely many zeros in $[0,1]$, so there are infinitely many critical points of $f_{p,\curve_1,\curve_2}$ in $[0,1]$ for $p=2$, giving rise to infinitely many critical values. Observe also that the related function $\curve_2(t)=(t,\exp(-\frac{1}{t^2})\sin(\frac{1}{t})+\alpha)$ for $t\neq 0$ and $\curve_2(0)=(0,\alpha)$ is infinitely differentiable in $[0,1]$ and has the same property.
\end{example} 
\begin{remark}
One might find it desirable to find assumptions which imply that the function $f_{p,\curve_1,\curve_2}$ is a Morse function. A \textbf{Morse function} is a $C^2$ function where the Hessian is non-singular at critical points, see~\cite{Guillemin2010}. The Morse lemma would then imply that the critical points described by the equations~\eqref{eq:boundarysingularities} are isolated, so that the same is also true for the critical values. However, it is noteworthy that there are cases where a small rotation and translation of one of the curves is not be enough to ensure that the critical points are isolated, as can be seen by an adjustment of Example~\ref{ex:infinitelysings}.
\end{remark}
We have the following result regarding the \textbf{regular}, i.e., non-critical values of $f_{p,\curve_1,\curve_2}$. To simplify the exposition, we assume that $\curve_1$ and $\curve_2$ are smooth, so that $\freespacediagram_{\delta}$ consists of a single cell. In particular, we can ignore singularities of $B_{\delta}$ due to the nonsmoothness of the curves where the smooth pieces join. 
%Most of the technicality of the discussion is due to the different behavior for different $p$.
\begin{proposition}\label{prop:regvalues}
The set $\regs$ of regular values of $f_{p,\gamma_1,\gamma_2}$ is open in $\RR$ for $1< p<\infty$ and dense if $2\le p<\infty$. %For $1\le p< 2$
For $1< p< 2$, there is a dense and open set in $\RR$ in which the only possible singular points of $f_{p,\curve_1,\curve_2}$ are the points $(t_1,t_2)$ where $(\curve_1)_i(t_1)=(\curve_1)_i(t_2)$ for some $1\le i\le d$. %A similar statement holds for $p=\infty$, where outside of a set where $f_{\infty,\curve_1,\curve_2}$ is not differentiable, critical points only show up for critical values in a closed null set. 
If the points $(t_1,t_2)$ where $(\curve_1)_i'(t_1)=(\curve_2)_i'(t_2)=0$ for all $i$ form a null set, then $\regs$ is also dense for $1<p<\infty$.    
\end{proposition}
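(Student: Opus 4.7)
The plan is to reduce everything to Sard's theorem, handling the regularity of $f := f_{p,\curve_1,\curve_2}$ carefully as it depends on $p$. For every $1 < p < \infty$ the Jacobian in~\eqref{eq:jacobian} is continuous on $[0,1]^2$ because $x \mapsto |x|^{p-1}\sign(x)$ extends continuously through $0$, so the critical set $K := J^{-1}(0)$ is compact. Hence $f(K)$ is compact and closed, and $\regs = \RR \setminus f(K)$ is open, which settles the openness claim for the full range $1 < p < \infty$.

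For density when $p \ge 2$ the Hessian coefficients in~\eqref{eq:hessianboundarysingularities} are continuous on $[0,1]^2$ (the factor $|x|^{p-2}$ is continuous for $p \ge 2$), so $f$ is $C^2$ and Sard's theorem gives that $f(K)$ has Lebesgue measure zero; combined with openness, this yields density. When $1 < p < 2$ the Hessian blows up on the closed locus $\bigcup_i S_i$ with $S_i := \{(t_1,t_2) : (\curve_1)_i(t_1) = (\curve_2)_i(t_2)\}$, so I would localize to the open set $U := [0,1]^2 \setminus \bigcup_i S_i$, on which $f$ is still $C^2$. Exhausting $U$ by an increasing sequence of compact subsets $U_n$ and applying Sard to each $f|_{U_n}$ shows that $f(K \cap U) = \bigcup_n f(K \cap U_n)$ is a countable union of compact, measure-zero, nowhere dense sets, i.e.\ a meagre $F_\sigma$; its complement in $\RR$ then contains a dense open subset $W$. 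For $\delta \in W$ every critical point of $f$ in $f^{-1}(\delta)$ must lie in $\bigcup_i S_i$, giving the structural statement.

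For the final assertion I would additionally bound $f(K \cap \bigcup_i S_i)$. The implicit function theorem makes each $S_i$ a $1$-dimensional $C^1$ submanifold of $[0,1]^2$ away from the vanishing locus of its defining gradient $((\curve_1)'_i(t_1), -(\curve_2)'_i(t_2))$; a critical point of $f$ lying on the smooth part of $S_i$ is automatically critical for the $C^1$ restriction $f|_{S_i}$, and Sard contributes a null set of values there. The null-set hypothesis on $Z := \{(t_1,t_2) : (\curve_1)'_i(t_1) = (\curve_2)'_i(t_2) = 0 \text{ for all } i\}$ is precisely what makes the remaining, simultaneously degenerate, portion of the $S_i$ negligible under $f$. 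Combining these measure-zero contributions with the openness of $\regs$ then gives density of $\regs$. The main technical obstacle is this last step: both the $C^2$-regularity of $f$ and the manifold structure of the $S_i$ can fail on the degenerate locus, and one has to parlay the null-set assumption on $Z$ into a null-set conclusion for its image under the Lipschitz map $f$.
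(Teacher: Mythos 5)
Your proof follows essentially the same route as the paper's: compactness of the critical set $K$ gives openness, classical Morse--Sard gives density for $p\ge 2$, and for $1<p<2$ one restricts to the open set $U$ off the locus $\bigcup_i S_i$, where $f$ has better regularity. The one genuine divergence is the tool used on $U$: the paper cites Bates's extension of the Morse--Sard theorem to maps with locally Lipschitz derivatives~\cite[Theorem~1]{Bates1993}, whereas you observe that $f$ is in fact $C^2$ on $U$ and apply classical Sard on a compact exhaustion --- a more elementary, self-contained variant. Your suggestion to restrict $f$ to the smooth part of each $S_i$ and invoke one-dimensional Sard for the final assertion is also a genuine addition that the paper does not spell out.

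Two remarks. First, the complement of a meagre $F_\sigma$ is a dense $G_\delta$ but need not contain a dense open subset (the irrationals are a counterexample), so that inference does not by itself yield the ``open'' part of the intermediate claim; the paper's proof likewise only establishes measure zero at this point, with openness argued separately only for $\regs$ itself via compactness. Second, the obstacle you flag at the end is real, and it is present in the paper's own argument as well: a $C^1$ map $[0,1]^2\to\RR$ can carry a two-dimensional Lebesgue-null set to a set of positive one-dimensional measure, so the null-set hypothesis on $Z$ does not automatically give that $f(Z)$ is null. The paper asserts exactly this implication (``the image of their union under the $C^1$ map $f_{p,\curve_1,\curve_2}$ is a null set'') without further justification, so you have correctly pinpointed the weakest step in the argument rather than missed an idea.
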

\begin{proof}
Let $\delta_n\xrightarrow{n\to\infty}\delta$ be a sequence of critical values with limit $\delta$. For every $\delta_n$, there is a singular point $(t_1^n,t_2^n)$ satisfying~\eqref{eq:boundarysingularities}. Since $[0,1]^2$ is compact, we can assume that $(t_1^n,t_2^n)$ converges by passing over to a subsequence. The set of critical values is then closed by continuity of~\eqref{eq:boundarysingularities} and $f_{p,\curve_1,\curve_2}$ in both $t_1$ and $t_2$ for $1<p<\infty$.

For $2\le p<\infty$, $f_{p,\gamma_1,\gamma_2}$ is $C^2$, so by the classical Morse-Sard theorem~\cite{Sard1942} the set of those $\delta^p>0$ that are the image of singular points in $B_{\delta}$ has Lebesgue measure zero in $\RR$. 
As the complement of a zero-set for the Lebesgue measure, the set of regular values is dense. 

We argue similarly for $1<p<2$. The regularity of $f_{p,\curve_1,\curve_2}$ is insufficient to obtain strong bounds on the size of the set of singular values, even with recent generalizations of the Morse-Sard theorem~\cite{Ferone2020}. We thus restrict attention to points $(t_1,t_2)$ where $(\curve_1)_i(t_1)\neq(\curve_1)_i(t_2)$ for all $i$, where the Jacobian~\eqref{eq:jacobian} of $f_{p,\gamma_1,\gamma_2}$ is locally Lipschitz, which implies that the set of singular values has measure zero, at least when restricted to these points~\cite[Theorem~1]{Bates1993}. The critical points $(t_1,t_2)$ of the equation $(\curve_1)_i(t_1)-(\curve_2)_i(t_2)=0$ are given by $(\curve_1)'_i(t_1)=0=(\curve_2)'_i(t_2)$. If these form a set of measure zero in $[0,1]^2$ for each $i$, then the image of their union under the $C^1$ map $f_{p,\curve_1,\curve_2}$ is a null set with respect to the Lebesgue measure.
%The cases where $p=1$ or $p=\infty$ are similar.	
\end{proof}
\begin{remark}
%A \textbf{Morse function} is a $C^2$ function where the Hessian is non-singular at critical points, see~\cite{Guillemin2010}. 
If the curves $(\curve_1)_{i}$ and $(\curve_2)_i$ are both Morse functions for all $i$, which is almost always the case, then $(\curve_1)_{i}(t_1)-(\curve_2)_i(t_2)$ also defines a Morse function $[0,1]^2\to \RR$ on the product space $[0,1]^2$~\cite[Exercise 4]{Audin2014}, which implies, by the Morse lemma, that the critical points are isolated~\cite[Theorem~3.1.1]{Nirenberg2001}. Therefore, the last assumption in Proposition~\ref{prop:regvalues} ensuring that $\regs$ is dense for $1<p<\infty$ is satisfied for almost all curves. 
\end{remark}

Proposition~\ref{prop:regvalues} does not rule out accumulation points of singular values for general smooth curves, even when $p=2$, as illustrated in Example~\ref{ex:infinitelysings}.% in Appendix~\ref{app:sec:singularities}. 

 %The following lemma is proved in Appendix~\ref{proof:lem:nocusps}.
\begin{lemma}\label{lem:nocusps}
Let $1<p<\infty$, $c$ be a simple component of $B_{\delta}$, such that $[0,1]^2\setminus c$ has two components, each with non-empty interior. Let $R_c$ be a region in $[0,1]^2$ enclosed by $c$. If $\overline{R_c}$ admits a neighborhood that has empty intersection with all other components $c'\neq c$ of $B_{\delta}$, then $c$ is everywhere differentiable, i.e., $B_{\delta}$ does not have any cusps.
\end{lemma}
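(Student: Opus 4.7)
I would argue by contradiction, assuming $c$ has a cusp at a point $q$. By the singularity equations~\eqref{eq:boundarysingularities}, $q$ is a critical point of $g := f_{p,\gamma_1,\gamma_2}-\delta^p$, i.e.\ $\nabla g(q)=0$. For the level set $\{g=0\}$ to exhibit a cusp at $q$—rather than an isolated point from a non-degenerate extremum or two smooth branches meeting transversally at a non-degenerate saddle, as would follow from the Morse lemma—the Hessian $H(q)$ must be degenerate. In suitable local coordinates aligned with the cusp tangent, a Taylor expansion then yields $g(x,y)=\tfrac{\lambda}{2}y^2-bx^3+O(\|(x,y)\|^4)$ with $\lambda,b\neq 0$, so $c$ is locally the cusp curve $y^2=\tfrac{2b}{\lambda}x^3$, which lies entirely in the half-plane $x\geq 0$. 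The interior of the cusp wedge coincides with $\{g<0\}$ and its complement (including all of $x<0$ in the local neighborhood) with $\{g>0\}$.

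By the isolation hypothesis, in a neighborhood $U$ of $\overline{R_c}$ the zero set of $g$ equals $c$, so $g$ takes a definite sign on each of $R_c\cap U$ and $U\setminus\overline{R_c}$. Hence $R_c$ locally coincides with either the wedge (Case 1, where $g<0$ on $R_c$) or with its complement (Case 2, where $g>0$ on $R_c$), and I must rule out both. Since $g$ has fixed sign on the interior of $R_c$ but vanishes on $c$, it attains its extremum at an interior point $p^*\in R_c$ with $g(p^*)\neq 0$ and $\nabla g(p^*)=0$. Consequently $g$ has at least two critical points on $\overline{R_c}$: the interior extremum $p^*$ (contributing index $+1$ to $-\nabla g$) and the boundary cusp $q$ (which a direct computation from the local normal form shows has index $0$).

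The crux of the proof, and its main obstacle, is to leverage this local and topological data against the isolation hypothesis. The subtle point in a Poincar\'e--Hopf-style count on the topological disk $\overline{R_c}$ ($\chi=1$) is that $-\nabla g$ vanishes on the boundary at $q$ rather than pointing strictly inward, which obstructs a direct application. My plan is to resolve this via a perturbation argument. Consider the family $g_\varepsilon:=g+\varepsilon$ for small $\varepsilon$: for one sign of $\varepsilon$ the level set $\{g_\varepsilon=0\}$ near $q$ is a smooth closed oval lying strictly inside the wedge, while for the opposite sign it is a smooth arc extending past $q$ into the wedge's complement; equivalently, the cusp of $c$ is the tangential limit at which two families of smooth level components of $g$ merge. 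Since shifting the level of $g$ corresponds to a small perturbation of $\delta$, continuity of $B_{\delta'}$ in $\delta'$ together with this two-sided limit structure at the cusp forces, at the fixed value $\delta$, another component $c'\neq c$ of $B_\delta$ to lie arbitrarily close to $\overline{R_c}$—contradicting the existence of an isolating neighborhood $U$ and thereby establishing that $c$ admits no cusps.
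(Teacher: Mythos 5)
Your proposal is a genuinely different approach from the paper's. The paper's proof does not argue by contradiction from a normal form at all: it locally replaces $\curve_1$ and $\curve_2$ by their tangent lines, invokes the classification from Alt--Godau~\cite{ALT1995} that the free space boundary for two segments is the boundary of an intersection of a $C^1$ ``$\ell_p$-ellipse'' with a rectangle (hence $C^1$ away from critical $\delta$), and then argues that the actual boundary is a small $C^1$ deformation of this model, preserving $C^1$-regularity. That argument is short and exploits the specific structure of the Fr\'echet free space rather than abstract singularity theory.

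Your argument has gaps at its two load-bearing steps. First, the normal form $g(x,y)=\tfrac{\lambda}{2}y^2-bx^3+O(\|(x,y)\|^4)$ requires at least $C^3$ (really $C^4$) regularity of $g$ near the cusp, but the paper only assumes $\curve_1,\curve_2\in C^2$, so $f_{p,\curve_1,\curve_2}$ is only $C^2$ (and merely $C^1$ for $1<p<2$). Even granting more smoothness, the trichotomy you invoke — that a degenerate critical point of a level set is either an isolated point, a transverse crossing, or an $A_2$ cusp — is false in this generality; a degenerate Hessian admits many other local level-set topologies (tangential self-intersections, higher-order cusps, entire critical arcs, flat pieces), and nothing pins down the $A_2$ form without algebraicity or genericity assumptions that the lemma does not make. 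Second, and more fundamentally, the closing perturbation argument does not produce a contradiction with the hypothesis. You deduce the existence of small level components of $g_\varepsilon=g+\varepsilon$ near $q$ for $\varepsilon\neq 0$, i.e.\ components of $B_{\delta'}$ for $\delta'\neq\delta$; but the isolating-neighborhood hypothesis concerns only the fixed level $\delta$. Nothing forces a second component $c'\subset B_\delta$ (at the \emph{same} level) to appear near $\overline{R_c}$: the ovals you see at $\delta'\neq\delta$ may simply shrink to the cusp point as $\delta'\to\delta$ and vanish there, which is entirely consistent with $B_\delta\cap U=c$. The Poincar\'e--Hopf index count is likewise not brought to a conclusion — you compute an index $0$ at the cusp and $+1$ at $p^*$, but you never extract an inconsistency with $\chi(\overline{R_c})=1$, and the presence of the boundary zero makes the theorem inapplicable without exactly the kind of careful doubling/perturbation you defer. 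As it stands, the argument identifies a plausible line of attack but does not reach a contradiction.
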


\begin{proof}%[Proof of Lemma~\ref{lem:nocusps}]\label{proof:lem:nocusps}
We sketch the proof, which can be seen as a consequence of the classification of the shape of the free space inside a cell of the free space diagram associated to two line segments~\cite[Lemma~3]{ALT1995}, as the intersection of a smooth ellipse with a rectangle. The statement there is only for the $\ell_2$-norm, but can be naturally extended by noting that for $1<p<\infty$, the boundary of the set of all points with norm $\le \delta$ from a given point is $C^1$, as follows from the regular value theorem since any $\delta>0$ is a regular value of the $\ell_p$-norm. 

The question of whether a point $(t_1,t_2)$ on $c$ corresponding to the two points $\curve_1(t_1)$ and $\curve_2(t_2)$ on the two curves with distance $\delta$ is a cusp is inherently a local one. By differentiability of the two curves, they are approximated locally by their tangents and within a sufficiently small neighborhoods of $t_1$ and $t_2$, we can picture the curves as a result of a differentiable deformation of their tangents at these points. Such a deformation leads to a similar local deformation of the boundary of the free space. Indeed,  the boundary of the free space for the linear tangents is $C^1$ for noncritical values of the distance parameter $\delta$, which remains that way after a small local differentiable deformation of the linear tangents. We therefore have $C^1$ regularity of $c$ at $(t_1,t_2)$. 
\end{proof}
As a consequence of Lemma~\ref{lem:nocusps}, the critical values of $f_{p,\gamma_1,\gamma_2}$ for $1<p<\infty$ coincide with the appearance of a new component, or the merging of components in the free space diagram.

%start of alternative proof strategy
We will show that if the piecewise smooth curves $\curve_1$ and $\curve_2$ consist of algebraically bounded curves, Proposition~\ref{prop:regvalues} can be strengthened. To this end, we need some preparation.
\begin{lemma}\label{lem:finiteradialextrema}
Let $\curve$ be an algebraically bounded curve. Then $\curve$ is tangent to only finitely many $\ell_2$-spheres centered at the origin, given by $\{x\in\RR^d\;\mid\; \|x\|_2=r\}$ for some $r\in \RR_+$, with bound depending only on the degree bound of the curve.
\end{lemma}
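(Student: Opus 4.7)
The plan is to translate the tangency condition into an equation in the parameter of $\curve$, and then use the algebraic boundedness to bound the number of solutions.

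First, I observe that a point $\curve(t_0)$ of $\curve$ lies on an $\ell_2$-sphere of radius $r$ centered at the origin and is a tangency point of $\curve$ with the sphere if and only if $\|\curve(t_0)\|_2 = r$ and the tangent vector $\curve'(t_0)$ is orthogonal to the outward normal of the sphere at $\curve(t_0)$, which points in the direction of $\curve(t_0)$ itself. This is the condition
\[
\braket{\curve(t_0),\curve'(t_0)}=0,
\]
equivalently, $t_0$ is a critical point of the function $g(t):=\|\curve(t)\|_2^2$, since $g'(t)=2\braket{\curve(t),\curve'(t)}$. Note that the radius of the tangent sphere equals $\sqrt{g(t_0)}$, so the number of distinct tangent spheres is bounded above by the number of critical points of $g$ on the parameter interval of $\curve$.

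Next, I would exploit the algebraic boundedness. Assuming $\curve$ admits a polynomial parametrization $\curve(t)=(p_1(t),\ldots,p_d(t))$ with $\deg p_i \le k$ for some constant $k$ depending only on the degree bound of the curve, $g(t)=\sum_i p_i(t)^2$ is a polynomial of degree at most $2k$, so $g'(t)$ is a polynomial of degree at most $2k-1$ and hence has at most $2k-1$ real zeros by the fundamental theorem of algebra, except in the degenerate case where $g$ is constant, in which case $\curve$ lies on a single sphere and the claim is trivial. Each zero of $g'$ contributes at most one tangent sphere, giving a bound depending only on the degree.

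The main obstacle is the case in which $\curve$ is defined only implicitly, as a smooth piece of an algebraic variety cut out by polynomials of bounded degree, without an explicit polynomial parametrization. In this setting, I would instead argue using the intersection-theoretic fact the paper already appeals to: the tangency locus on $\curve$ is the intersection of $\curve$ with the algebraic hypersurface cut out by the equation obtained by eliminating the direction of the tangent line in the defining system of $\curve$, which is a polynomial condition of bounded degree in the ambient coordinates. By Bezout's theorem (or the bound on the number of intersections of an algebraic curve with a bounded-degree hypersurface, which is precisely the central property highlighted in the paper), this intersection consists of finitely many points, with a bound depending only on the degree. Passing to radii $\|\curve(t_0)\|_2$ then gives the desired bound on the number of tangent spheres.
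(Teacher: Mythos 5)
Your proof is correct, but it takes a genuinely different route from the paper's. You work directly with the algebraic tangency condition $\braket{\curve(t),\curve'(t)}=0$ (equivalently, critical points of $g(t)=\|\curve(t)\|_2^2$), and then invoke either the fundamental theorem of algebra (for a polynomial parametrization) or a Bezout-type bound on the intersection of $\curve$ with the bounded-degree hypersurface cut out by the appropriate Jacobian minors (for an implicit presentation). The paper instead reduces the sphere-tangency problem to the hyperplane-tangency problem: it constructs a rational variant of stereographic projection that simultaneously carries every origin-centered $\ell_2$-sphere to a hyperplane, observes that the image of $\curve$ under this rational map is again an algebraic curve of bounded degree, and then appeals to the known statement that a bounded-degree algebraic curve has only boundedly many tangencies with hyperplanes orthogonal to a fixed direction. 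Your approach is more direct and self-contained (one writes down a single elimination/determinant condition of bounded degree and counts roots), and in the parametric case it yields an explicit bound such as $2k-1$; the paper's approach avoids writing out the tangency hypersurface explicitly, at the cost of introducing the projection map and tracking how degrees transform under it, but has the advantage of reducing to the simplest possible intersection statement (hyperplanes only), which is also the ``central property'' the paper singles out. One small point to tighten in your implicit-case argument: you should explicitly note that if $\curve$ lies entirely in the tangency hypersurface then $g$ is constant along $\curve$, so $\curve$ is contained in a single sphere and the conclusion holds trivially; otherwise the intersection is a proper subvariety of the one-dimensional $\curve$ and hence finite with a Bezout bound.
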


\begin{proof}
First note that the statement of the lemma is well-known to be true if we replace $\ell_2$-spheres by hyperplanes orthogonal to a coordinate axis. Indeed, an algebraic curve of a given degree only has finitely many extrema in a given coordinate direction (if it is not contained in such a plane), as the extrema correspond to algebraic sets themselves, obtained through partial derivatives of each of the defining equations of $\curve$. To derive the statement for $\ell_2$-spheres, we make use of a variant of stereographic projection that considers all $\ell_2$-spheres of positive radius at the same time.  For a sphere $S^{d-1}_{R^2}$, centered at the origin, with radius $R^2$, we define the function (see Figure~\ref{fig:stereoprojection})

\[
S^{d-1}_{R^2}\owns (x_1,\ldots,x_d)\mapsto \left(\frac{2R^2}{R^2-x_d}(x_1,\ldots,x_{d-1}),-R^2\right),
\]   
with inverse 
\[
\RR^d\owns (x_1,\ldots,x_{d-1},-R^2)\mapsto \left(\frac{4R^4}{\sum_{i=1}^{d-1}x_i^2+4R^4} (x_1,\ldots,x_{d-1}),\left(1-2\frac{4R^4}{\sum_{i=1}^{d-1}x_i^2+4R^4}\right)R^2\right).
\]
\begin{figure}[t]
  \centering
   \includegraphics{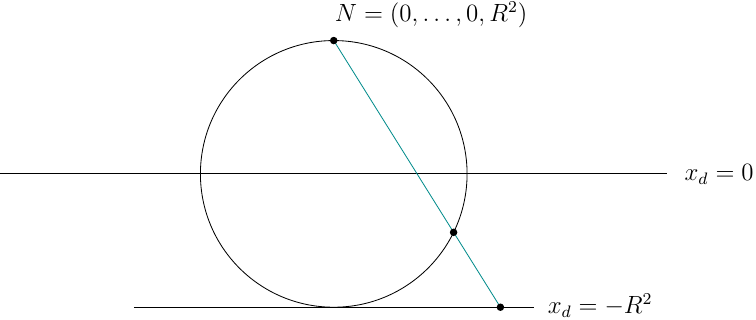}
      \caption{Illustration of the variant of stereographic projection we use in Lemma~\ref{lem:finiteradialextrema}.}\label{fig:stereoprojection}
\end{figure}
Outside of the points where $x_1=x_2=\ldots=x_{d-1}=0$, we obtain a rational map $\Psi$ defined on all of $\RR^d$, mapping all $\ell_2$-spheres centered at the origin to hyperplanes, with rational inverse. Therefore, if $\curve$ was tangent to infinitely many spheres, the algebraic curve that is its image under $\Psi$ would be tangent to infinitely many hyperplanes. However, as pointed out above, the number of times this can happen is a finite number depending on the degree of the algebraic curve in question, concluding the proof.  
\end{proof}

\begin{lemma}\label{lem:finiteintersectionsphere}
Algebraically bounded curves can intersect an $\ell_p$-sphere only a finite number of times, with bound depending only on the degree bound.
\end{lemma}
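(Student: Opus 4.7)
The plan is to reduce counting the intersections of $\curve$ with a fixed $\ell_p$-sphere centered at the origin to a root-counting problem on sub-intervals of $[0,1]$ with a fixed coordinate sign pattern, where the equation becomes essentially algebraic. First, I would partition $[0,1]$ into finitely many maximal open sub-intervals on each of which every coordinate $(\curve)_i(t)$ has constant sign. Because the coordinates of an algebraically bounded curve have only a bounded number of sign changes, the number of such sub-intervals depends only on the degree bound. Fixing a sign vector $\sigma\in\{-1,+1\}^d$ on a single sub-interval, the intersection equation $\|\curve(t)\|_p^p=r^p$ is equivalent to
\[
F(t):=\sum_i\sigma_i^{p}(\curve)_i(t)^p - r^p = 0,
\]
and the task reduces to bounding the number of zeros of $F$ on this sub-interval in terms of the degree bound alone.

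For integer $p$, the function $F$ is a polynomial expression in the algebraic coordinate functions $(\curve)_i(t)$, hence an algebraic function of $t$ of degree bounded by $p$ times the degree of $\curve$; the fundamental theorem of algebra (after clearing denominators if $\curve$ is given by rational functions) then yields the desired bound. For non-integer $p\in(1,\infty)$, $F$ is still real-analytic on the sub-interval, since each $(\curve)_i$ is analytic and $x\mapsto x^p$ is analytic away from $x=0$, so by the identity theorem for real-analytic functions $F$ either has only finitely many zeros on the sub-interval or vanishes identically. The latter would force the curve segment to lie on a single $\ell_p$-sphere; for non-integer $p$ this amounts to the transcendental identity $\sum_i\sigma_i^{p}(\curve)_i(t)^p\equiv r^p$ among algebraic functions, which can be excluded by a transcendence argument except in the degenerate situation where $\curve$ lies in a coordinate subspace, handled by induction on the ambient dimension $d$.

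To promote this finiteness to a bound depending only on the degree of $\curve$, I would apply an iterated Rolle argument: between consecutive zeros of $F$, the derivative
\[
F'(t)=p\sum_i\sigma_i^{p}(\curve)_i(t)^{p-1}(\curve)_i'(t)
\]
has a zero, and $F'$ has the same structural form as $F$ but with the exponent reduced by one. After a bounded number of such descents (depending on $p$ alone), the exponent becomes non-positive and the equation becomes algebraic in the $(\curve)_i$ and $(\curve)_i'$, whose zero count is controlled by the degree of $\curve$; combining the bounds across descent levels yields the uniform bound. The main technical obstacle is the non-integer case: excluding the ``curve lies on the sphere'' degeneracy and verifying that the Rolle-descent is uniform, i.e., that potential cancellations among the $(\curve)_i'$ factors at each stage do not permit an unbounded number of zeros, both require careful use of the algebraic nature of $\curve$ together with the transcendence of $x\mapsto x^p$.
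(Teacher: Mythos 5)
Your approach is genuinely different from the paper's. The paper argues by contradiction: if a curve meets an $\ell_p$-sphere at arbitrarily many points, compactness gives a cluster of intersection points, and a pencil of $\ell_2$-spheres centered at a point near the cluster then has arbitrarily many tangencies to the curve, contradicting Lemma~\ref{lem:finiteradialextrema} (which in turn handles $\ell_2$-sphere tangencies via a rational change of coordinates mapping concentric $\ell_2$-spheres to parallel hyperplanes). You instead attempt a direct zero-counting argument via a sign-pattern decomposition of $[0,1]$, which is natural and cleanly disposes of integer $p$ by the algebraicity of $F$.

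For non-integer $p$, however, the proposal has a real gap, and it is the case that the paper's argument is actually designed to avoid. First, the Rolle descent does not terminate as you describe. Differentiating $F'(t)=p\sum_i(\sigma_i(\curve)_i(t))^{p-1}\sigma_i(\curve)_i'(t)$ by the product rule produces a \emph{sum} of terms with exponents $p-1$ and $p-2$ together with second derivatives $(\curve)_i''$, so the claim that ``$F'$ has the same structural form as $F$ but with the exponent reduced by one'' is false: the expression grows in complexity at each stage rather than simplifying. More fundamentally, for non-integer $p$ the exponent $p-k$ is never a non-negative integer, so the promised ``exponent becomes non-positive and the equation becomes algebraic'' stage is never reached; terms $(\sigma_i(\curve)_i)^{p-k}$ with $p-k<0$ are no more algebraic than the original, and they blow up at the sub-interval endpoints where $(\curve)_i$ vanishes. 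Making such a descent uniform is essentially a Khovanskii-/Pfaffian-type argument and is not a consequence of Rolle plus the degree bound alone. Second, the transcendence claim needed to exclude the ``curve lies on the $\ell_p$-sphere'' degeneracy (that $\sum_i(\sigma_i(\curve)_i(t))^p\equiv r^p$ forces degeneracy for non-integer $p$) is asserted but not proved, and the proposed induction on $d$ is not spelled out. You flag both obstacles yourself in the closing paragraph, which is honest, but as written they are genuine gaps, not routine verifications. The paper's compactness-plus-$\ell_2$-pencil reduction sidesteps the non-integer-exponent difficulty entirely, which is the main structural advantage of that route.
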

\begin{proof}
Assume otherwise, so that there is a sphere $S_p$ that intersects a curve $\curve$ some large number of times. Since $S_p$ is compact, there are arbitrarily many intersection points within a small neighborhood $U$ in $S_p$. Consider the pencil of $\ell_2$-spheres centered at a point in the middle of $U$. One readily sees that the number of times $\curve$ is tangent to some sphere in this pencil can be made arbitrarily large, which would contradict Lemma~\ref{lem:finiteradialextrema}.
\end{proof}
The underlying idea of the proof of Lemma~\ref{lem:finiteintersectionsphere} can be also used to show the following.
\begin{proposition}\label{prop:algebraicregvalues}
Let $\curve_1$ and $\curve_2$ be algebraically bounded curves. Then there are only finitely many (with bound depending only on the degree bound) critical values of $f_{p,\curve_1,\curve_2}$ for $1<p<\infty$. %For $p=1$, outside of a set of finitely many critical values of $f_{p,\curve_1,\curve_2}$, there are no critical points other than the points $(t_1,t_2)$ where $(\curve_1)_i(t_1)=(\curve_2)_i(t_2)$ for some $1\le i\le d$, where $B_{\delta}$ is generally not $C^1$. The case $p=\infty$ is similar.
\end{proposition}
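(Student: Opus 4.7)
The plan combines a gradient-vanishing argument for the critical set with a component-counting bound. Let $\Sigma := \{(t_1, t_2) \in [0,1]^2 : J_1(t_1,t_2) = J_2(t_1,t_2) = 0\}$ denote the critical set of $f := f_{p,\curve_1,\curve_2}$, with Jacobian components as in~\eqref{eq:jacobian}. The first observation is that $f$ is constant on each connected component of $\Sigma$: for any $C^1$ path $\eta: [a,b] \to \Sigma$,
\[
\frac{d}{ds}f(\eta(s)) = J_1(\eta(s))\,\eta_1'(s) + J_2(\eta(s))\,\eta_2'(s) = 0,
\]
since $\nabla f = (J_1, J_2) \equiv 0$ on $\Sigma$ by definition. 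Hence $f$ takes at most one value on each connected component of $\Sigma$, so the number of critical values is bounded by the number of connected components of $\Sigma$, and it suffices to bound the latter by a function of the degree.

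For $p = 2$ this is immediate: the equations $J_1 = J_2 = 0$ are polynomial in $(t_1, t_2)$ of degree controlled by the degree bound of the curves, so $\Sigma$ is a real algebraic variety and its number of connected components is polynomial in the degree by the Milnor--Thom theorem. For general $1 < p < \infty$, I first partition $[0,1]^2$ using the algebraic arrangement determined by the polynomial equations $(\curve_1)_i(t_1) = (\curve_2)_i(t_2)$ for $1 \le i \le d$; the total number of regions is bounded in terms of the degree. Within each resulting sign region the absolute values in~\eqref{eq:jacobian} resolve to fixed powers of polynomials, so the system becomes real-analytic, and I argue by contradiction following the idea of Lemma~\ref{lem:finiteintersectionsphere}: were $\Sigma$ to have infinitely many connected components in some region, I would extract a convergent sequence of critical points $(t_1^n, t_2^n) \to (t_1^*, t_2^*)$ with pairwise distinct critical values $\delta_n$, each corresponding to an $\ell_p$-tangency of $\curve_2$ at $\curve_2(t_2^n)$ with the $\ell_p$-sphere of radius $\delta_n$ centered at $\curve_1(t_1^n)$. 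Introducing the pencil of $\ell_2$-spheres centered at $\curve_1(t_1^*)$, the accumulation together with the real-analyticity of the gradient equations within the region forces infinitely many tangencies of $\curve_2$ with spheres in the pencil, contradicting Lemma~\ref{lem:finiteradialextrema}.

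The main obstacle is the technical conversion, in the case $p \neq 2$, from accumulating $\ell_p$-tangencies at centers drifting along $\curve_1$ to $\ell_2$-tangencies of $\curve_2$ with a pencil at the single fixed center $\curve_1(t_1^*)$; this step is essentially vacuous for $p = 2$. Handling it requires carefully exploiting the interplay between the fixed-sign resolution of the Jacobian and the polynomial nature of $\curve_1, \curve_2$, in order to force oscillations of the $\ell_2$-distance function from $\curve_1(t_1^*)$ along $\curve_2$ out of the accumulation of $\ell_p$-tangent conditions with varying centers. The final quantitative degree-dependent bound then follows by combining the Milnor--Thom bound in the $p = 2$ case with the quantitative form of Lemma~\ref{lem:finiteradialextrema} and the arrangement-complexity bound on the number of sign regions in $[0,1]^2$.
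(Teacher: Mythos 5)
Your proposal follows essentially the same route as the paper's proof: characterize critical points geometrically as simultaneous $\ell_p$-tangency conditions on $\curve_1$ and $\curve_2$, assume infinitely many distinct critical values, extract a convergent subsequence of critical points by compactness, and derive a contradiction with Lemma~\ref{lem:finiteradialextrema} via a pencil of $\ell_2$-spheres. A few of your framing choices differ from the paper but are largely equivalent or superfluous: the observation that $f$ is constant on connected components of the critical set $\Sigma$, and the reduction to bounding components, is precisely the paper's instructive preamble for the $p=2$ polynomial case (with Milnor--Thom in place of the paper's finite-union-of-irreducible-components argument), but the compactness argument in the general case does not go through components at all --- one extracts the accumulating sequence directly from the assumed infinite set of distinct critical values. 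The sign-region arrangement you introduce to resolve the absolute values is not in the paper and is not needed for this extraction, though it does no harm. The paper also invokes Lemma~\ref{lem:nocusps} to interpret a critical value as the birth or tangency of components of $B_\delta$, which you bypass by working directly with $\Sigma$; this is a harmless shortcut.

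The gap you flag at the end is genuine, and it is exactly the crux of the argument: converting the accumulation of $\ell_p$-tangencies at drifting centers $\curve_1(t_1^n)$ and with pairwise-distinct radii $\delta_n$ into an unbounded number of tangencies of $\curve_2$ with a single pencil of $\ell_2$-spheres centered at a fixed point $z$. You explicitly leave this as a "main obstacle" requiring further work, which means your proof as written is incomplete for $p\neq 2$ (and, as you note, even for $p=2$ the centers $\curve_1(t_1^n)$ still drift, so the step is not entirely vacuous there either --- what is vacuous is the $\ell_p$-to-$\ell_2$ conversion, not the fixed-center reduction). The paper fills this step with a geometric argument: since $x_k\to x^*$ and $y_k\to y^*$, the tangent vectors of $\curve_1$ at $\curve_1(x_k)$ and of $\curve_2$ at $\curve_2(y_k)$ bundle together; choosing $z$ sufficiently close to the limit $\curve_1(x^*)$ on the appropriate side, the pairwise-distinct distances $\delta_k$ force $\curve_2$ to alternate between approaching and receding from $z$, producing unboundedly many $\ell_2$-tangencies of $\curve_2$ with spheres in the pencil at $z$ (cf.\ Figure~\ref{fig:criticalsisolated}). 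You would also need to address the explicit degree-dependent bound, not merely finiteness: the paper does this by a rescaling argument showing that a family of bounded-degree curves with unboundedly many critical values again produces an arbitrarily dense cluster of tangencies, contradicting Lemma~\ref{lem:finiteradialextrema}; your appeal to Milnor--Thom only covers $p=2$ with polynomial parametrizations, and the arrangement-complexity bound you mention does not by itself bound the number of components of the now-analytic, non-algebraic critical set inside each sign region.
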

It is instructive to first consider the important case that $p=2$, and both $\curve_1$ and $\curve_2$ have (componentwise) parametrizations as polynomials or more generally rational functions. Observe that in this case the set of critical points of $f_{p,\curve_1\curve_2}$ is an algebraic set, and thus the finite union of irreducible algebraic sets, which correspond to connected components in both the Zariski and the usual topology~\cite{Hartshorne1977} (for this, instead of just real numbers, we also allow complex numbers as input for $f_{p,\curve_1,\curve_2}$). The image on each connected set of critical points under the $C^1$ function $f_{p,\curve_1,\curve_2}$ is constant, so the image of all critical points leads to only finitely many critical values of $f_{p,\curve_1,\curve_2}$. Note that the number of irreducible algebraic sets is bounded in terms of the degree bound of the curves.
\begin{proof}[Proof of Proposition~\ref{prop:algebraicregvalues}]
We now treat the general case, for which we argue geometrically. By Lemma~\ref{lem:nocusps}, for $1<p<\infty$, a critical value $\delta$ of the function corresponds to either the appearance of a new component of $B_{\delta}$, or to two (or more) components becoming tangent, at some critical point $(x,y)\in [0,1]^2$. 
In $\RR^d$, by~\eqref{eq:boundarysingularitiesgeom}, these events correspond to $\curve_1$ becoming tangent at $\curve_1(x)$ to an $\ell_p$-ball centered at a point $\curve_2(y)$; conversely, the same ball centered at $\curve_1(x)$ is tangent to $\curve_2$ at $\curve_2(y)$. 

Assume that there are infinitely many distinct critical values $\delta$ where these events take place. Then, by compactness of $[0,1]^2$, there is a convergent subsequence of critical points in $[0,1]^2$ corresponding to pairs of points on the two curves where these events happen for distinct values of $\delta$. We can assume that the terms of the sequence are pairwise distinct. We consider the sequences as given by $(x_k)_{k\in\NN}$ and $(y_k)_{k\in\NN}$ in the first and second coordinates, respectively, and assume, without loss of generality, that the $y_k$ are pairwise distinct. The tangent vectors of $\curve_1$ at $\{x_k\}$ lie in the hyperplanes tangent to balls centered at the convergent sequence of points $\curve_2(y_k)$. For sufficiently large $k$, the tangent vectors of $\curve_1$ at $\curve_1(x_k)$ are roughly the same. As a consequence, the pencil of $\ell_2$-spheres based at a point $z$ sufficiently close to $\curve_1(x_k)$ contains an unbounded number of tangencies to the curve $\curve_2$ as $\curve_2$ alternates between moving away from and towards $z$. This is illustrated in Figure~\ref{fig:criticalsisolated}, showing the tangent vectors of $\curve_2$ `bundle' together around the eventual limit vector. We can further assume that the $\ell_2$ spheres do not contain an entire part of $\curve_2$ by applying a small perturbation to the sphere if necessary.
However, this would again contradict Lemma~\ref{lem:finiteradialextrema}, so we can discard the assumption of infinitely many critical values of $f_{p,\curve_1,\curve_2}$. 
\begin{figure}[t]
  \centering
   \includegraphics{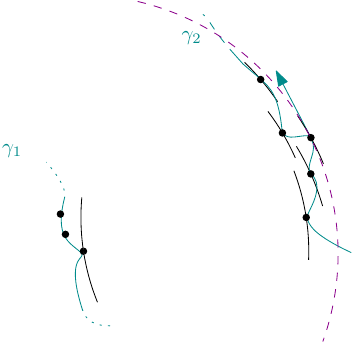}
      \caption{A section (the dashed magenta line) of a sphere of the pencil that contains a large number of tangencies to $\curve_2$.}\label{fig:criticalsisolated}
\end{figure}
To see that the number of critical values is in fact bounded by a constant depending on the bound on the degrees of the algebraic curves, we assume otherwise and argue very similarly to above. Indeed, given a sequence of curves with unbounded number of critical values leads to an arbitrary number of critical points within an arbitrary small neighborhood in $[0,1]^2$. By rescaling, we can assume that the arclength of the curves in the sequence is bounded and thus, just as above, we have that an arbitrary number of critical points correspond to points and tangents on each of the curves that are arbitrarily close to each other.    
%For $p=1$ or $p=\infty$, the same arguments hold, as long as we restrict attention to the critical points lying outside of where $f_{p,\curve_1,\curve_2}$ is not differentiable. 
\end{proof}
%end of addition

%In the cases $p=1$ or $p=\infty$, the boundary of the free space may feature cusps, even when the curves under consideration are polygonal. 
%On the other hand, outside of these points, the result is true, with the same proof.

Similarly to the proof of Proposition~\ref{prop:algebraicregvalues} which bounds the number of critical values of $\delta$, we also obtain a similar bound on the number of marked points for a regular value of $\delta$.
\begin{proposition}\label{prop:finitepointsfreespace}
The number of points (components) in each of the sets $E_h^+,E_h^-, E_v^+,E_v^-$, (and $\intersections$) is at most a constant in each cell, for a total of $O(mn)$, whenever $\curve_1,\curve_2$ are algebraically bounded, consisting of $m$, resp. $n$ smooth pieces, and $\delta$ is a regular value. The resulting decomposition of the free space diagram consists of $O(mn)$ cells.
\end{proposition}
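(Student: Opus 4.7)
The plan is to reduce the proposition to two separate constant bounds per cell of $\freespacediagram_\delta$: first on $|E_h^\pm|$ and $|E_v^\pm|$, and then on the number of intersections $|\intersections_{\subcell}|$ contributed by the corresponding subcell walls. Both are strengthened versions of tangency and intersection counts that have already been established for algebraically bounded curves against $\ell_p$-spheres in the previous lemmas. Summing these constant-per-cell bounds over the $mn$ cells, and observing that a constant number of horizontal and vertical cuts in a cell produces only a constant number of subcells, then yields both the $O(mn)$ bound on $\intersections$ and on the number of subcells of the decomposition.

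First I would bound $|E_h|$ inside a single cell. A point $(x,y) \in E_h$ corresponds geometrically to the algebraically bounded piece of $\curve_1$ being tangent at $\curve_1(x)$ to the $\ell_p$-sphere of radius $\delta$ centered at $\curve_2(y)$, together with the sphere condition $\|\curve_1(x) - \curve_2(y)\|_p = \delta$. Arguing by contradiction, suppose the cell contains infinitely many such points. Compactness of the cell yields a convergent subsequence with accumulation point $(x^\ast, y^\ast)$. The tangencies along the subsequence then correspond to $\curve_1$ being tangent to $\ell_p$-spheres whose centers $\curve_2(y_k)$ cluster near $\curve_2(y^\ast)$, at points $\curve_1(x_k)$ clustering near $\curve_1(x^\ast)$. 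Applying the pencil-of-$\ell_2$-spheres construction used in the proof of Proposition~\ref{prop:algebraicregvalues}, based at a point sufficiently close to $\curve_1(x^\ast)$, produces an unbounded number of tangencies of $\curve_2$ with $\ell_2$-spheres centered at that fixed point, contradicting Lemma~\ref{lem:finiteradialextrema}. The same argument applies to $E_h^-$, $E_h^+$ (distinguished by the local side of the forbidden region, which only refines the bound) and, by exchanging the roles of the two curves, to $E_v^\pm$. To promote finiteness to a uniform constant bound depending only on the degree, I would repeat this argument on a hypothetical sequence of algebraically bounded curve pairs with unbounded $|E_h|$, exactly as done at the end of the proof of Proposition~\ref{prop:algebraicregvalues}.

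Second, for each cell I would bound the contribution to $\intersections$ arising from each subcell wall. Since $\delta$ is a regular value, Proposition~\ref{prop:regvalues} and Proposition~\ref{prop:algebraicregvalues} guarantee $\sings = \emptyset$, so every subcell wall is either a side of the cell (coming from the piecewise structure of $\curve_1,\curve_2$) or a horizontal or vertical segment through a point of $E_h \cup E_v$. A horizontal line $\{y = y_0\}$ meets $B_\delta$ exactly at those $x$ where $\curve_1(x)$ lies on the $\ell_p$-sphere of radius $\delta$ centered at $\curve_2(y_0)$; Lemma~\ref{lem:finiteintersectionsphere} bounds the number of such $x$ by a constant, and the same argument works for vertical lines. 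Combined with the constant bound on the number of walls from the first step, this gives $|\intersections_{\subcell}| = O(1)$ per cell.

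Finally, a cell that is cut by a constant number of horizontal and vertical segments is partitioned into a constant number of subcells. Summing over the $m \cdot n$ cells of $\freespacediagram_\delta$ produces the claimed $O(mn)$ totals for $\intersections$ and for the number of subcells in the refined decomposition. The main obstacle is the uniform-in-degree bound in the first step for non-integer $p$: for integer $p$ the defining equations of $E_h$ are algebraic and Bezout gives the bound directly, but for arbitrary $1 < p < \infty$ the absolute values break algebraicity and the geometric argument via the stereographic pencil of $\ell_2$-spheres is essential; care is needed to verify that the passage to a limit pair of curves preserves the algebraic degree bound so that Lemma~\ref{lem:finiteradialextrema} remains applicable to the limit.
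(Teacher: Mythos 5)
Your proposal is correct and takes essentially the same route as the paper: both argue by compactness and contradiction, using the pencil-of-$\ell_2$-spheres construction from the proof of Proposition~\ref{prop:algebraicregvalues} together with Lemma~\ref{lem:finiteradialextrema} to bound $|E_h|$ and $|E_v|$ per cell, then apply Lemma~\ref{lem:finiteintersectionsphere} to each of the constantly many cut lines to bound $|\intersections_\subcell|$, and use a sequence/rescaling argument to make the constant depend only on the degree bound. Your version is slightly more explicit about the geometric characterization of $E_h$ as tangency to an $\ell_p$-sphere and about the final caveat concerning passage to the limit of curve pairs, but the underlying argument is the same.
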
 	
Proposition~\ref{prop:finitepointsfreespace} implies that the bound on the degree of a set of algebraically bounded curves determines a bound on the number of points used in the decomposition of $\freespacediagram_{\delta}$, for any regular value of $\delta$, thereby providing the justification for the complexity bounds on the running-time of our  decision algorithm. The proof is very similar to that of Proposition~\ref{prop:algebraicregvalues}.% and can be found in Appendix~\ref{app:sec:singularities}.

\begin{proof}[Proof of Proposition~\ref{prop:finitepointsfreespace}]
Assume that there is a sequence of values for a given $\delta$ for which the number of points in, say, $E_h$ becomes greater than any given constant, then we can arrive at a contradiction, similarly to the proof of Proposition~\ref{prop:algebraicregvalues}.
  The idea is that in this case, there has to be an arbitrary number of points in, say, $E_h$, that are arbitrarily close to each other in $[0,1]^2$. By rescaling the curves (and $\delta$) to normalize the length of the longer curve, we can assume that these points correspond to arbitrarily close points along $\curve_1$ and in $\RR^d$. This again allows the construction of a pencil of $\ell_2$-spheres with an arbitrary number of tangencies to $\curve_2$, contradicting Lemma~\ref{lem:finiteradialextrema}.  
  %Note that if $p$ is rational, Lemma~\ref{lem:lpspherealgebraic} implies that this number must be finite by standard intersection theory. This implies that there is an $\ell_p$-sphere $S^{\text{rat}}$ close to $S(\delta)$ that can intersect $\curve_2$ at most a fixed number of times. Then we arrive at a contradiction by noticing that we can locally perturb both $\curve_2$ (without changing the degree) and the radius and position of $S^{\text{rat}}$ to increase arbitrarily the number of intersections of $\curve_2$ and $S^{\text{rat}}$. 

By Lemma~\ref{lem:finiteintersectionsphere}, we see that each of the finitely many extrema leads to only finitely many marked points. Again by rescaling, one can eliminate the possibility that the maximum number of points in $\intersections$ depends on $\delta$.  
%Moreover, to see that the bound on this finite number depends only on the degree bound of the curves, we again assume otherwise and consider a sequence of algebraic curves of bounded degree with a diverging number of points in $I$. This leads to a contradiction, similarly to before, as some of the points become arbitrarily close to one another. By rescaling to normalize the length of the longer curve, the corresponding points on the curves lie arbitrarily close to each other. Similarly, by rescaling, one can eliminate the possibility that the maximum number of points in $\intersections$ depends on $\delta$.  
\end{proof}

\section{The minimization problem of computing the Fréchet distance}\label{sec:minimalproblem}
The aim of this section is to use the solution of the decision problem to solve the optimization problem of finding the minimal $\delta$ for which the decision problem for two curves has a positive answer, which corresponds to the Fréchet distance between the curves. Assuming that all $m$, resp. $n$ smooth pieces of $\curve_1$, resp. $\curve_2$ are algebraically bounded curves, we show that the number of operations required for the computation is of the same complexity as the polygonal case. The results outlined in this section mostly essentially mirror those for smooth planar curves under the $\ell_2$-distance in~\cite{Rote2007}. %with the $\ell_2$ distance in

The central idea is to invoke Megiddo's parametric search technique~\cite{Megiddo1983}, together with the optimization introduced by Cole~\cite{Cole1987}. Parametric search is the standard approach to computing the Fréchet distance and its variants~\cite{ALT1995,oostrum2004, Rote2007, chambers2010}. In the following, we explain the adjustments necessary to apply it to the situation of algebraically bounded curves. %We refer to Appendix~\ref{app:minimalproblem} for a more detailed account.

The combinatorial structure of the partition of the free space diagram introduced in Section~\ref{sec:combdescfreespace} changes only at certain \textbf{critical values} of $\delta$. In addition to these values being critical values of the function $f_{p,\curve_1,\curve_2}$ introduced in Section~\ref{sec:singularitystructure}, there are further possibilities. 
By Lemma~\ref{lem:nocusps}, the only values of $\delta$ where the decision problem can change are related to the emergence or merging of components of free space, a component becoming tangent to the boundary of walls of original cells, a marked point appearing or disappearing, or the ordering of the $x$- and $y$-coordinates of the marked points changing. The results of the previous section imply that the set of $\delta$ where none of these events occur is open. Moreover, we observe that a new marked point that is not naturally related to a previously marked point can appear (or disappear) only a finite number of times, bounded by a constant depending on the degree of the algebraically bounded curves. Similarly to the proof of Proposition~\ref{prop:finitepointsfreespace}, this is again a consequence of the fact that if there did exist a sequence of values of $\delta$ with an ever increasing number of such events, then there is a sequence of points on the curves that ultimately contradict the property that algebraically bounded curves cannot be tangent to sphere centered at a fixed point more than a fixed number of times.
A similar argument also shows that the ordering of the marked points cannot change more often than a constant depending on the degree bound of the curves, as the ordering is subject to change only when extremal points meet on the same vertical or horizontal line in $\freespacediagram_{\delta}$.

There are $O(mn)$ critical values of $\delta$ between which no marked points appear or disappear, and no components merge, appear, or start touching the boundary of cells. 
We apply a binary search among these $O(mn)$ critical values so that the only changes to the answer of the decision problem are due to a change of the order of the extremal points in the $x$- and $y$-direction. Every step of the binary search involves running the decision algorithm, resulting in a running time of $O(mn\log(mn))$ for this step. Note that since we cannot run our decision algorithm directly on these critical values, we have to move to a value for $\delta$ that lies between them. As a result, the binary search only narrows down the interval of $\delta$ to two possible, adjacent intervals, instead of one. Inbetween these $O(mn)$ critical values, we use Cole's variant of parametric search with a parallel sorting algorithm for both the $x$- and $y$-coordinates of the marked points of $B_{\delta}$, to obtain an overall running time of $O(mn\log(mn))$. The value of the Fréchet distance $\distf(\curve_1,\curve_2)$ is the lowest point of the lowest interval in which the decision problem has a positive answer. 
The result is summarized in Theorem~\ref{thm:smoothcomplexity}.
\begin{theorem}\label{thm:smoothcomplexity}
Let $\curve_1$ and $\curve_2$ be two algebraically bounded curves in $\RR^d$ consisting of $m$ and $n$ pieces, respectively. Then the Fréchet distance between $\curve_1$ and $\curve_2$ can be computed in $O(mn)$ space and in $O(mn\log(mn))$ operations (of bounded algebraic complexity).
\end{theorem}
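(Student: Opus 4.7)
The plan is to combine the $O(mn)$ decision algorithm of Proposition~\ref{prop:decisionproblempath} with parametric search, following the classical template of Alt and Godau but taking care of the events specific to the smooth setting. First I would classify the \emph{critical values} of $\delta$ at which the answer to the decision problem can change. By Lemma~\ref{lem:nocusps} and the discussion preceding the theorem, such changes are caused only by: (i)~the emergence, disappearance, or mutual tangency of components of $B_{\delta}$; (ii)~a component of $B_{\delta}$ becoming tangent to a wall of the original $m\times n$ cell decomposition; (iii)~a marked point in $E_h^{\pm}\cup E_v^{\pm}\cup\intersections$ appearing or disappearing; and (iv)~two marked points coinciding in their $x$- or $y$-coordinate. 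Events of types (i)--(iii) happen $O(1)$ times per pair of smooth pieces (using the algebraic boundedness, Lemma~\ref{lem:finiteradialextrema}, Lemma~\ref{lem:finiteintersectionsphere}, and Proposition~\ref{prop:algebraicregvalues} applied per cell), giving $O(mn)$ critical values of types (i)--(iii).

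Next I would perform a \textbf{binary search} over these $O(mn)$ non-coordinate-ordering critical values. Each step sorts (or locates) a median critical $\delta^*$ and runs the decision algorithm of Proposition~\ref{prop:decisionproblempath} at some regular value slightly displaced from $\delta^*$; this costs $O(mn)$ per step and $O(\log(mn))$ steps, for a total of $O(mn\log(mn))$. After this phase, the true Fréchet distance is pinned down to lie in one of two adjacent open intervals bounded by consecutive type~(i)--(iii) critical values, inside which only type~(iv) events (changes in the $x$/$y$-ordering of the $O(mn)$ marked points) can affect the decision. On such an interval, the marked points are continuous algebraic functions of $\delta$ and their combinatorial partition into subcells depends only on the ordering of their coordinates.

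To handle type~(iv) events I would apply \textbf{Cole's variant} of Megiddo's parametric search~\cite{Megiddo1983,Cole1987} with a parallel sorting algorithm as the generic algorithm: sorting the $x$- and $y$-coordinates of the $O(mn)$ marked points is a comparison-based computation whose comparisons, as functions of $\delta$, are equations of bounded algebraic complexity (roots of these equations are exactly the type~(iv) critical values). Each comparison is resolved by one call to the decision algorithm at a carefully chosen intermediate $\delta$, and Cole's batching reduces the number of serial decision calls to $O(\log(mn))$. This yields $O(mn\log(mn))$ total operations of bounded algebraic complexity for the parametric search phase, and since the decision algorithm uses $O(mn)$ space to store the subcell decomposition and reachable intervals, the total space usage remains $O(mn)$.

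The main technical obstacle is the justification that parametric search applies in our setting even though the marked points are no longer rational functions of $\delta$ as in the polygonal case. This is why the binary-search preprocessing is essential: it isolates an open interval of $\delta$ on which the combinatorial structure of $G_{\delta}$ is stable in the sense that no marked points appear or disappear, so that on this interval every marked point traces out a single continuous algebraic branch and every comparison between two coordinates becomes a well-defined algebraic equation of bounded degree in $\delta$. With this stability in hand, Cole's parametric search goes through as in~\cite{Rote2007,ALT1995} with the comparisons evaluated via the decision algorithm of Proposition~\ref{prop:decisionproblempath}, and the Fréchet distance is read off as the infimum of the $\delta$ values for which the decision algorithm returned a positive answer.
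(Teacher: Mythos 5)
Your proof plan follows the paper's argument closely: the same classification of critical values (appearance/merging of components, tangency to cell walls, appearance/disappearance of marked points, and coordinate-ordering swaps), the same two-phase strategy of first binary searching over the $O(mn)$ non-ordering critical values to isolate two adjacent candidate intervals, and then running Cole's variant of parametric search with a parallel sorting network on the $x$- and $y$-coordinates of the marked points. Your added remarks about space usage and about why the binary-search preprocessing is needed to make the comparisons well-defined algebraic functions of $\delta$ are consistent with, and make slightly more explicit, what the paper leaves implicit.
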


%We describe an example of how to compute the sets $E_v$, $E_h$, and $\sings$ for a subclass of algebraic curves with respect to the $\ell_1$ norm in Appendix~\ref{sec:cubicspline}.
%We describe an example of how to compute the sets $E_v$, $E_h$, and $\sings$ for a subclass of algebraic curves with respect to the $\ell_1$ norm in the following section.

\section{Simplification of piecewise smooth curves}\label{sec:simpcurves}
In this section, we introduce an algorithm to simplify a piecewise smooth curve consisting of $n$ smooth pieces and runs in $O(n)$ time. %The proofs of our results can be found in Appendix~\ref{sec:simplificationcurves}.
We adapt the approach of~\cite{Driemel2012} to our setting of piecewise smooth curves with appropriate changes.  
%The algorithm takes a parameter $\mu\in\RR$ and the simplified curve has the following two key properties.
%\begin{enumerate}%\label{enum:thingstoconsider}
%\item The simplified curve is $\mu$-close to the original curve w.r.t.\ the Fréchet distance.
%\item The relative free space complexity (Definition~\ref{def:relativefreespace}) of two simplified curves is linear in the number of curve pieces, ensuring that the decision problem can be efficiently solved for the simplified curves.  
%\end{enumerate}

\subparagraph{Computing the simplification $\simp(\curve)$ of a piecewise smooth curve $\curve$ with respect to $\mu\in\RR$.}\label{algo:simplification}
Starting with the first smooth piece $\curve_1$ of $\curve$, consider its arc length $l(\curve_1)$. If $l(\curve_1)\ge \mu$, then we move on to the next piece $\curve_2$ and start over from there. 
Let $\curve_i:[0,1]\to\RR^d$ be the first piece with $l(\curve_i)<\mu$. Then $\curve_i$ is contained in the ball $B(\curve_i(0),\mu)$ of radius $\mu$ centered at $\curve_i(0)$. Following the curve $\curve$, let $\curve_j$ be the first piece of $\curve$ that leaves $B(\curve_i(0),\mu)$. We take the first point $p_j$ on $\curve_j$ lying on the boundary $B(\curve_i(0),\mu)$ and join $\curve_i(0)$ to $p_j$ with a straight line of length $\mu$, replacing the part of $\curve$ between $\curve_i(0)$ and $p_j$; and $\curve_j$ with the part of it after $p_j$. We then restart the procedure at $p_j$. If after any piece $\curve_k:[0,1]\to\RR^d$ with $l(\curve_k)<\mu$, $\curve$ does not intersect the boundary of $B(\curve_k(0),\mu)$, then we remove the part of $\curve$ after $\curve_k(0)$.

The simplification algorithm %runs in time linear in the number of smooth pieces of $\curve$ and 
results in a piecewise smooth curve with each piece of having an arc length of at least $\mu$. 
%Numerical issues relating to robustness can be avoided by not requiring points of the simplified curve to lie on the boundary of $B(\curve_i(0),\mu)$, but instead close to it. 
Note that the algebraic operation of intersecting $\curve$ with a ball is the same operation required in the computation of the Fréchet distance between two curves.

%\begin{remark}
%One might envisage a simplification of the algebraic complexity of the individual operations in the computation of the Fréchet distance, which could be achieved by replacing the curved parts of the input curves by simpler pieces, with a priori estimates for the quality of the approximation. However, the quality of approximations by linear segments or circular arcs for a given curved piece depend on the arc length of the curved piece separating the points. Indeed, the local distance between the points of a curve and the points of a circle cannot be less than the distance between the points of the curve and the points of a tangent circle, and this distance is of cubic order in the distance in the arc length. 
%\end{remark}  

\begin{lemma}\label{lem:simpdistance}
The simplification $\curve'=\simp(\curve,\mu)$ of a piecewise smooth curve defined by Algorithm~\ref{algo:simplification} above satisfies $\curve\subset \curve'\bigoplus B(0,\mu)$. In particular, $\delta_F(\curve',\curve)\le \mu$. 
\end{lemma}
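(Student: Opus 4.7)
The plan is to prove the inclusion $\curve \subset \curve' \oplus B(0,\mu)$ first by a case analysis that tracks exactly what the simplification procedure does to each portion of $\curve$, and then to upgrade this inclusion to the Fréchet bound by exhibiting an explicit pair of monotone reparametrizations. Every point of $\curve$ falls into one of three categories: (i) it lies on a smooth piece $\curve_k$ with $l(\curve_k) \ge \mu$, which is kept verbatim, so it already lies on $\curve'$ (distance $0$); (ii) it lies on a subarc of $\curve$ between some $\curve_i(0)$ and the exit point $p_j \in \partial B(\curve_i(0),\mu)$ that was replaced by the chord $[\curve_i(0), p_j]$; or (iii) it lies on a tail of $\curve$ after $\curve_k(0)$ that was discarded because $\curve$ never leaves $B(\curve_k(0),\mu)$.

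For case (ii), I would observe that by the construction the entire subarc lies inside $B(\curve_i(0),\mu)$: pieces preceding $\curve_j$ were each of length less than $\mu$ and were fully contained in the ball by the iterative triggering condition, and the initial portion of $\curve_j$ up to $p_j$ lies in the ball by definition of $p_j$ as the first exit point. Since $\curve_i(0)$ is a vertex of $\curve'$, every such point is within distance $\mu$ of $\curve'$. Case (iii) is handled identically using $\curve_k(0) \in \curve'$. This yields the inclusion statement.

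For the Fréchet bound I would build, in a piecewise fashion, a pair $(\alpha,\beta) \in A_{[0,1]}^2$ witnessing $\max_t \|\curve'(\alpha(t)) - \curve(\beta(t))\| \le \mu$. On unchanged pieces both reparametrizations are the identity on the relevant subinterval. For each replaced subarc I would use the standard ``wait then move'' coupling: on a first subinterval I hold $\alpha$ constant at the parameter of $\curve_i(0)$ on $\curve'$ while $\beta$ traces out the original subarc from $\curve_i(0)$ to $p_j$ (here $\curve(\beta(t)) \in B(\curve_i(0),\mu)$ throughout, so the distance is $\le \mu$); then on a second subinterval I hold $\beta$ fixed at $p_j$ while $\alpha$ traverses the chord from $\curve_i(0)$ to $p_j$ on $\curve'$ (again, the chord lies in $B(\curve_i(0),\mu)$, so $\|\curve'(\alpha(t)) - p_j\| \le 2\mu$ in general, but since the chord has length $\le \mu$ and ends at $p_j$, actually $\|\curve'(\alpha(t)) - p_j\| \le \mu$). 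The truncated tail in case (iii) is handled similarly, holding $\alpha$ at the last parameter of $\curve'$ while $\beta$ finishes traversing $\curve$, with all distances bounded by $\mu$ via containment in $B(\curve_k(0),\mu)$.

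The main technical care, rather than a true obstacle, is the bookkeeping: verifying that the concatenation of these piecewise reparametrizations yields continuous, strictly increasing maps $\alpha,\beta\colon[0,1]\to[0,1]$, and that the chord bound for the second phase of each replacement is actually $\mu$ and not $2\mu$. The latter holds because the chord $[\curve_i(0),p_j]$ is a subset of the closed ball of radius $\mu$ around either endpoint (both have $\|\curve_i(0) - p_j\| = \mu$ and the segment is a diameter of its own containing ball centered at the midpoint), but we only need that every point on the chord is within $\mu$ of $p_j$, which follows directly from the chord having length $\mu$ and $p_j$ being one of its endpoints. Once these checks are made, the lemma follows.
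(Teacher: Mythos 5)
Your proposal is correct and takes essentially the same approach as the paper: both establish the Minkowski-sum inclusion by observing that each replaced subarc is contained in a ball of radius $\mu$ around the retained vertex, and both upgrade this to the Fréchet bound via the ``wait then move'' coupling. You are somewhat more explicit about the case analysis (including the discarded tail) and the chord-length bookkeeping, but the underlying argument is the same.
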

\begin{figure}[t]
\centering 
\includegraphics{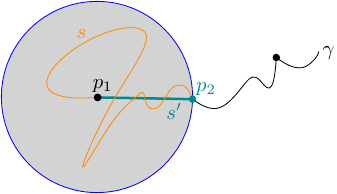}
    \caption{Illustration of Lemma~\ref{lem:simpdistance}, showing that the simplification $s'=\simp(s,\mu)$ is $\mu$-close to $s$.}\label{fig:simplification}
\end{figure}
Here, $A \bigoplus B$ denotes the Minkowski sum of $A$ and $B$. The situation is illustrated in Figure~\ref{fig:simplification}. 
\begin{proof}%[Proof of Lemma~\ref{lem:simpdistance}]
Let $s'$ be the first new segment of $\curve'$ and $s$ the portion of $\curve$ that gets replaced by $s'$ in $\curve'$ and has the same endpoints $p_1$ and $p_2$ as $s'$. Now, $s$ is contained in $B(p_1,\mu)$, so $s\subset s' \bigoplus B(0,\mu)$ and the statement about the Minkowski sum follows, which in turn implies the statement about the Fréchet distance. Indeed, we can traverse $\curve'$ and $\curve$ until $p_1$, then traverse $\curve\cap B(p_1,\mu)$ while the parametrization of $\curve'$ stays on $p_1$. Then, $s'$ is traversed until $p_2$ is reached for both parametrizations, while maintaining a distance of at most $\mu$.
\end{proof}

\subsection{Preserving $c$-packedness}
We make precise the idea that since each piece of a simplified curve $\simp(\curve,\mu)$ is $\mu$-close to $\curve$, the length contained in any ball cannot increase too much through the simplification.
\begin{lemma}\label{lem:arclengthsimpbound}
Let $\curve'=\simp(\curve,\mu)$, with $\curve$ a piecewise smooth curve. Then for any $p\in \RR^d$ and $r\in \RR$, $l(\curve\cap B(p,r+\mu))\ge l(\curve'\cap B(p,r))$.
\end{lemma}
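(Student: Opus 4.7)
The plan is to reduce the global inequality to a per-pair statement between each sub-arc $s$ of $\curve$ that gets replaced by a straight segment $s'$ in $\curve'$, and then close each per-pair statement using two matching triangle-inequality bounds.

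First I would partition $\curve$ into the portions that appear identically in $\curve'$ and the replaced sub-arcs $s_1,s_2,\ldots$. By the description of $\simp$ above, each replaced sub-arc $s$ starts at some point $p_1$, ends at a point $p_2\in\partial B(p_1,\mu)$, and lies entirely in $\overline{B(p_1,\mu)}$; its replacement $s'$ in $\curve'$ is the straight segment from $p_1$ to $p_2$, of length $\mu$. Since $B(p,r)\subset B(p,r+\mu)$, any unchanged portion $u$ trivially satisfies $l(u\cap B(p,r+\mu))\ge l(u\cap B(p,r))$, and additivity of arc length reduces the lemma to showing, for each replaced pair, that $l(s\cap B(p,r+\mu))\ge l(s'\cap B(p,r))$.

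Next, for a fixed pair I would set $d:=\|p_1-p\|$ and $L':=l(s'\cap B(p,r))$, and split on whether $d\le r$. The easy case $d\le r$ yields $s\subset B(p_1,\mu)\subset B(p,r+\mu)$ by the triangle inequality, so $l(s\cap B(p,r+\mu))=l(s)\ge \mu\ge L'$. In the interesting case $d>r$, I expect the argument to rest on the matching of two triangle-inequality bounds: on the $s'$ side, any arc-length parameter $t\in[0,\mu]$ with $s'(t)\in B(p,r)$ satisfies $d\le\|p_1-s'(t)\|+\|s'(t)-p\|\le t+r$, forcing $t\ge d-r$ and hence $L'\le r+\mu-d$; on the $s$ side, parametrizing $s$ by arc length from $p_1$, the bound $\|s(\tau)-p_1\|\le \tau$ together with the triangle inequality gives $\|s(\tau)-p\|\le \tau+d$, so that the initial sub-arc of length $r+\mu-d$ lies in $B(p,r+\mu)$, yielding $l(s\cap B(p,r+\mu))\ge r+\mu-d$. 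Combining the two bounds closes the per-pair inequality.

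The main obstacle I anticipate is conceptual rather than technical. Because $s$ can be a highly oscillating sub-arc in $B(p_1,\mu)$ while $s'$ is straight, it is tempting to try to set up a pointwise or Fr\'echet-style correspondence between $s$ and $s'$ and transfer length from $s'\cap B(p,r)$ to $s\cap B(p,r+\mu)$ through it, but no natural matching transfers length with the right scaling. The plan above avoids any matching altogether by recognizing that the single quantity $r+\mu-d$ simultaneously upper-bounds $L'$ (from the straightness of $s'$) and lower-bounds $l(s\cap B(p,r+\mu))$ (from the arc-length bound on $s$), so no explicit correspondence is required.
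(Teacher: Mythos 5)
Your proof is correct, and it closes the per-piece inequality by a genuinely different argument than the paper. Both proofs share the same reduction: additivity of arc length plus $B(p,r)\subset B(p,r+\mu)$ reduce the claim to showing $l(s\cap B(p,r+\mu))\ge l(s'\cap B(p,r))$ for each replaced sub-arc $s$ and its straight replacement $s'$. The paper then argues geometrically: it invokes Lemma~\ref{lem:simpdistance} to place $s$ in $s'\bigoplus B(0,\mu)$, sets $s'_B:=s'\cap B(p,r)$, erects two hyperplanes orthogonal to $s'_B$ at its endpoints, and observes that $s\cap\left(s'_B\bigoplus B(0,\mu)\right)\subset B(p,r+\mu)$ contains a path between the two hyperplanes, hence has length at least $l(s'_B)$. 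You instead anchor everything at $p_1$ and pass through the single intermediate quantity $r+\mu-\|p_1-p\|$, which upper-bounds $l(s'\cap B(p,r))$ because $\|p_1-s'(t)\|=t$ on the straight segment, and lower-bounds $l(s\cap B(p,r+\mu))$ because $\|s(\tau)-p_1\|\le\tau$ under arc-length parametrization. This is more elementary---no Minkowski-sum containment, no orthogonal-plane construction, everything is a one-dimensional triangle-inequality calculation---at the modest cost of a case split on $\|p_1-p\|\lessgtr r$ and of comparing the two sides through a proxy rather than directly. When writing it up, you should also note explicitly that the trailing portion of $\curve$ that $\simp$ may delete without replacement poses no issue, since removing mass from $\curve'$ only helps the inequality; and that $l(s)\ge\mu$ (which you use in the case $\|p_1-p\|\le r$) holds because $s$ runs from $p_1$ to a point on $\partial B(p_1,\mu)$.
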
 
The situation is illustrated in Figure~\ref{fig:arclengthbound}.  
%The proof is deferred to the Appendix~\ref{sec:simplificationcurves}.

\begin{figure}[t]
\centering 
\includegraphics{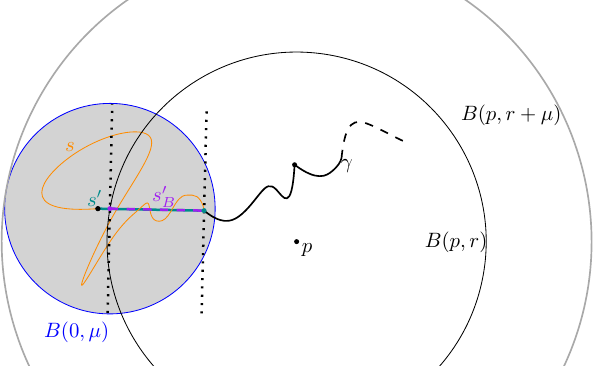}
    \caption{Illustration of the proof of Lemma~\ref{lem:arclengthsimpbound}.}\label{fig:arclengthbound}
\end{figure}

\begin{proof}%[Proof of Lemma~\ref{lem:arclengthsimpbound}]
Let $s'$ be a segment of $\curve'$ and $s'_B:=s' \cap B(p,r)$. Consider the subcurve $s$ of $\curve$ that gets replaced by $s'$ in $\curve'$ and has the same endpoints as $s'$. By Lemma~\ref{lem:simpdistance}, $s\subset s' \bigoplus B(0,\mu)$. Moreover, $s'_B\bigoplus B(0,\mu)\subset B(p,r+\mu)$, so $s\cap \left(s'_B\bigoplus B(0,\mu)\right)\subset  B(p,r+\mu)$. Since $s'_B$ is a straight line, it is the shortest path connecting its two endpoints. Therefore, we can erect two planes orthogonal to $s'_B$ at the endpoints such that any path connecting these two planes must have length at least $l(s'_B)$. In particular, since  $s\cap s'_B\bigoplus B(0,\mu)$ contains such a path, $l(s\cap \left(s'_B\bigoplus B(0,\mu)\right))\ge l(s'_B)$. The statement of the lemma follows by summing over the intersections of all pieces of $\curve'$ with $B(p,r)$.
\end{proof}

\begin{lemma}\label{lem:simpcpacked}
Let $\curve$ be a $c$-packed planar smooth curve and $\mu>0$. Then the simplified curve $\curve'=\simp(\curve,\mu)$ is a $7c$-packed curve.
\end{lemma}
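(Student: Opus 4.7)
The plan is to bound $l(\curve'\cap B(p,r))$ for an arbitrary ball $B(p,r)\subset \RR^d$ by splitting into two regimes according to the size of $r$ relative to $\mu$. When $r\ge \mu/6$, I would invoke Lemma~\ref{lem:arclengthsimpbound} directly: $l(\curve'\cap B(p,r))\le l(\curve\cap B(p,r+\mu))\le c(r+\mu)\le 7cr$, using the $c$-packedness of $\curve$, and no further work is needed. The interesting case is $r<\mu/6$, where $r+\mu$ is much larger than $r$ and the direct bound is too weak.

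For $r<\mu/6$, I would decompose the pieces of $\curve'$ into two kinds: the arcs of $\curve$ that survive simplification, and the straight segments of length exactly $\mu$ inserted by the algorithm. The surviving arcs form a subset of $\curve$, so $c$-packedness bounds their contribution to $l(\curve'\cap B(p,r))$ by $cr$. Since $r<\mu/2$, each inserted segment meets $B(p,r)$ in a chord of length at most $2r$ (it cannot lie entirely inside), so the remaining task is to bound the number $k_s$ of inserted segments meeting $B(p,r)$.

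To bound $k_s$, I would argue directly from the algorithm: each inserted segment $s_i$ replaces a subcurve $\sigma_i$ of $\curve$ that lies in $\overline{B(\curve_i(0),\mu)}$ and has arc length at least $\mu$ (since its endpoints are at Euclidean distance $\mu$), and the subcurves $\sigma_i$ belonging to distinct inserted segments are pairwise disjoint in $\curve$. Whenever $s_i$ meets $B(p,r)$, the endpoint $\curve_i(0)$ lies in $B(p,r+\mu)$, so $\sigma_i\subset B(p,r+2\mu)$. Summing then yields $k_s\cdot \mu\le \sum_i l(\sigma_i)\le l(\curve\cap B(p,r+2\mu))\le c(r+2\mu)$. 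Combining with the $cr$ bound for surviving arcs gives $l(\curve'\cap B(p,r))\le cr+2r\cdot k_s\le cr+2cr^2/\mu+4cr$, and substituting $r<\mu/6$ makes this strictly less than $7cr$.

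The hard part is the counting argument for inserted segments. The key ingredients are the pairwise disjointness of the replaced subcurves, the lower bound $\mu$ on each of their arc lengths, and their confinement to $B(p,r+2\mu)$, which together allow $c$-packedness of $\curve$ to be applied at a modestly enlarged scale and then transferred to $\curve'$. The case split at $r=\mu/6$ is tuned so that Lemma~\ref{lem:arclengthsimpbound} alone delivers $7cr$ whenever $r$ is not too small, leaving the counting argument necessary only in the complementary regime.
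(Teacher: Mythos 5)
Your proof is correct, and the high-level skeleton matches the paper's: split on the size of $r$ relative to $\mu$, invoke Lemma~\ref{lem:arclengthsimpbound} directly in the large-$r$ regime, and for small $r$ decompose $\curve'\cap B(p,r)$ into arcs inherited from $\curve$ (contributing at most $cr$ by $c$-packedness) and inserted straight segments (each contributing at most $2r$), so that the remaining work is to bound the number of inserted segments meeting the ball. Where you genuinely diverge is in that counting step. The paper argues by contradiction: from $l(\curve'\cap B(p,r))>7cr$ it deduces $k_n>3c$, observes the $k_n$ length-$\mu$ segments all lie in $B(p,2\mu)$, applies Lemma~\ref{lem:arclengthsimpbound} a second time to pass to $\curve\cap B(p,3\mu)$, and contradicts $c$-packedness at scale $3\mu$. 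You instead bound the count directly from the structure of the simplification algorithm: each inserted segment pulls back to a pairwise-disjoint subcurve of $\curve$ of arc length at least $\mu$ confined to $B(p,r+2\mu)$, giving $k_s\mu\le c(r+2\mu)$ immediately. Your route avoids the second use of Lemma~\ref{lem:arclengthsimpbound}, is a direct estimate rather than a contradiction, and makes explicit \emph{why} the number of inserted segments is controlled; the paper's route reuses its helper lemma to achieve the same effect more tersely. You also tune the threshold to $r=\mu/6$ rather than $r=\mu$, which makes the large-$r$ case the tight one ($c(r+\mu)\le 7cr$ exactly at the boundary), whereas in the paper the tight constraint comes from the small-$r$ counting. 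Both yield the constant $7c$, and both arguments are sound.
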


\begin{proof}%[Proof of Lemma~\ref{lem:simpcpacked}]
For the sake of contradiction, assume that $l(\curve'\cap B(p,r))> 7cr$. We consider first the case that $r\ge \mu$. Then Lemma~\ref{lem:arclengthsimpbound} implies that $l(\curve\cap B(p,r+\mu))\ge l(\curve\cap B(p,r+\mu))\ge l(\curve'\cap B(p,\mu))>7cr$. However, since $\curve$ is $c$-packed, we also have $l(\curve\cap B(p,r+\mu))\le (r+\mu)c   \le 2cr$, a contradiction. 

Let $r<\mu$. Let $k$ denote the number of curve pieces of $\curve'$ in $B(p,r)$ and consider $k=k_{n}+k_{o}$, where $k_o$ denotes the number of original pieces of $\curve'$ in $B(p,r)$ that are also present in $\curve\cap B(p,r)$ and $k_n$ denotes the number of new curve pieces that intersect $B(p,r)$. Since $\curve$ is $c$-packed, the curved pieces in $B(p,r)$ present in both $\curve$ and $\curve'$ cannot have a total length greater than $cr$. Therefore, if $l(\curve'\cap B(p,r))> 7cr$, then the new pieces have to add up in length to at least $6cr$. Since the new pieces are straight lines they can contribute at most $2r$ in length to the intersection $\curve'\cap B(p,r)$, so $k_n>6cr/2r=3c$.

Observe that $l(\curve'\cap B(p,2\mu))\ge l(\curve'\cap B(p,r+\mu))\ge k_n\mu$, since a new segment of $\curve'$ intersecting $B(p,r)$ also intersects $B(p,r+\mu)$ and each curve piece in $\curve'$ has length at least $\mu$. Therefore, by Lemma~\ref{lem:arclengthsimpbound}, $l(\curve \cap B(p,3\mu))\ge l(\curve' \cap B(p,2\mu))\ge l(\curve' \cap B(p,r+\mu))\ge k_n \mu> 3c\mu$, which contradicts the $c$-packedness of $\curve$.
\end{proof}
We note that the results and proofs in this section are valid for all simplifications of $c$-packed curves, as long as the simplified curve 
\begin{enumerate}
\item is contained in a tubular neighborhood of the original curve,
\item consists of pieces with arc length bounded from below by $\mu$.
\end{enumerate}
%In particular, approximations of the original curve by circular arcs of a minimal length satisfy these assumptions. Moreover, these observations hold independently of the dimension of the ambient space.

\section{The relative free space complexity of simplified smooth curves}\label{sec:freespacecomplexity}
The central idea of the approximation algorithm for the decision problem for piecewise smooth curves is to replace the two curves in question with their simplifications. To study the complexity of the reachable free space, we examine the following central quantity.
\begin{definition}\label{def:epsrelativefreespace}
The $\eps$-relative free space complexity of two curves $\curve_1$ and $\curve_2$ is   
\begin{align*}
N(\eps,\curve_1,\curve_2)=\max\limits_{\delta\ge 0}N_{\le \delta}(\simp(\curve_1,\eps\delta),\simp(\curve_2,\eps\delta)).
\end{align*}
\end{definition}

\begin{proposition}\label{prop:linearrelcomplexity}
Let $\curve_1$ and $\curve_2$ be two piecewise smooth curves that are $c$-packed and such that the arc length of each of the $n$ pieces is at least $\mu=\eps\delta$. Then the number of cells in $\freespacediagram_{\delta}(\curve_1,\curve_2)$ containing non-empty free space is in $O(cn/\eps)$.
\end{proposition}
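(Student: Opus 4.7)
The plan is a covering argument that bounds non-empty cells row by row. For a fixed piece $e_i$ of $\curve_1$ with arc length $l_i \ge \mu$, I would cover $e_i$ with $K_i = O(1 + l_i/\delta)$ balls of radius $\delta$ centered at points equally spaced along $e_i$, chosen so that the $\delta$-tube $N(e_i,\delta)=\{p\in\RR^d:\mathrm{dist}(p,e_i)\le\delta\}$ is contained in the union of the same-centered balls enlarged to radius $3\delta/2$. Any piece $e_j$ of $\curve_2$ that contributes a non-empty cell $(i,j)$ must meet $N(e_i,\delta)$, and hence one of these enlarged balls.

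The key intermediate step is to show that any one ball $B$ of radius $O(\delta)$ is met by at most $O(c/\eps)$ pieces of $\curve_2$. If $e_j$ meets $B$ at a point $q$, then since $l_j \ge \mu = \eps\delta$ one can walk along $e_j$ from $q$ in the direction with greater remaining length to trace a sub-arc of length at least $\mu/2$; this sub-arc lies inside the ball $B^+$ of radius $O(\delta)+\mu/2=O(\delta)$ (using $\eps\le 1$). The sub-arcs obtained in this way for different pieces of $\curve_2$ are pairwise disjoint subsets of $\curve_2$, so their total arc length is bounded by $l(\curve_2\cap B^+) \le O(c\delta)$ through the $c$-packedness of $\curve_2$. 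Dividing by $\mu/2$ gives the claimed bound of $O(c\delta/\mu)=O(c/\eps)$ pieces meeting $B$.

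Multiplying the per-ball count by $K_i$ and summing over the pieces of $\curve_1$ gives a total of $O((m + L_1/\delta)\,c/\eps)$ non-empty cells, where $L_1=\sum_i l_i$. The remaining step, which I expect to require the most care, is the conversion of this length-dependent bound into the clean $O(cn/\eps)$ form. The minimum-length hypothesis $l_i \ge \eps\delta$ enters here in the form $1 + l_i/\delta \le 2\,l_i/(\eps\delta)$, so each summand can be charged to arc length; combining this with the symmetric column-wise bound $O((n_2 + L_2/\delta)\,c/\eps)$ for $\curve_2$ and accounting for long pieces (if needed, by an implicit subdivision of any piece of length much greater than $\mu$ into sub-arcs of length $\Theta(\mu)$, in the spirit of the simplification of Lemma~\ref{lem:simpcpacked}) yields the stated bound of $O(cn/\eps)$ non-empty cells, where $n$ counts the smooth pieces of $\curve_1$ and $\curve_2$.
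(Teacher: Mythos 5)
There is a genuine gap, and you have correctly identified where it is, but the patch you gesture at does not work. Your covering argument yields the bound $O\left((m + L_1/\delta)\,\frac{c}{\eps}\right)$, and the term $L_1/\delta$ is simply not controlled by the hypotheses: the lower bound $l_i \ge \mu$ prevents pieces from being too short, but there is no upper bound on $l_i$, so a single piece of $\curve_1$ of length $L \gg n\delta$ (say, a long segment far from $\curve_2$ except near one end) makes $L_1/\delta$ arbitrarily large relative to $n$ while the true cell count stays small. The proposed repair --- subdividing long pieces into sub-arcs of length $\Theta(\mu)$ --- makes matters worse, not better: since $\mu = \eps\delta \le \delta$, subdividing a piece of length $L$ produces $\Theta(L/\mu) \ge \Theta(L/\delta)$ new pieces, so the $O(\cdot\, n)$ term in the target bound inflates by at least as much as the $L_1/\delta$ term you were trying to absorb. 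Taking the minimum with the symmetric column-wise bound also fails, because both bounds carry an uncontrolled length term.

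The idea you are missing is a \emph{charging rule}: the paper charges each non-empty cell $(i,j)$ to the \emph{shorter} of the two pieces $u_1, u_2$, and then bounds each piece's counter uniformly by $O(c/\eps)$. Concretely, for a piece $u$ receiving a charge, one uses a \emph{single} ball $B\bigl(p_m, \tfrac{3}{2}l(u) + \delta\bigr)$ centered at the midpoint of $u$ rather than a $\Theta(1 + l(u)/\delta)$-ball cover. Every point of $u$ lies at distance at least $l(u) + \delta$ from the boundary of this ball, so any piece of the other curve that produces a non-empty cell with $u$ and is at least as long as $u$ must contribute at least $l(u)$ of arc length inside the ball. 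Then $c$-packedness gives the counter bound
\[
\frac{c\bigl(\tfrac{3}{2}l(u) + \delta\bigr)}{l(u)} \;=\; \frac{3}{2}c + \frac{c\delta}{l(u)} \;\le\; \frac{3}{2}c + \frac{c\delta}{\mu} \;=\; O\!\left(\frac{c}{\eps}\right),
\]
uniformly in $l(u)$: when $u$ is short, the $c\delta/l(u)$ term is controlled by $l(u) \ge \mu$; when $u$ is long, the radius scales with $l(u)$ and the bound degrades gracefully to $O(c)$. Summing over all $n$ pieces gives $O(cn/\eps)$. Your sub-arc/disjointness argument inside a ball is the right mechanism for the $O(c/\eps)$ per-ball count, but without the shorter-piece charging it cannot escape the dependence on total arc length.
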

%We defer the proof of the proposition as well as the following theorem to Appendix~\ref{sec:simplificationcurves}.
\begin{proof}%[Proof of Proposition~\ref{prop:linearrelcomplexity}]
Assign a counter to all pieces of the two curves, starting with the value $0$.
Each cell in $\freespace_{\le\delta}(\curve_1,\curve_2)$ has two associated smooth pieces $u_1$ and $u_2$ of $\curve_1$ and $\curve_2$, respectively. The free space in the cell associated to $u_1$ and $u_2$ is non-empty if and only if there are two points $p_1\in u_1 $ and $p_2\in u_2$ such that $\|p_1-p_2\|\le \delta$. For every cell in $\freespace_{\le \delta}$ with non-empty free space, we add one to the counter of the shorter of the two pieces. Figure~\ref{fig:counters} shows the pieces that contribute to the counter for $u_1$ in red. We bound the maximal possible value of the counter of each piece as follows.
\begin{figure}[t]
\centering 
\includegraphics{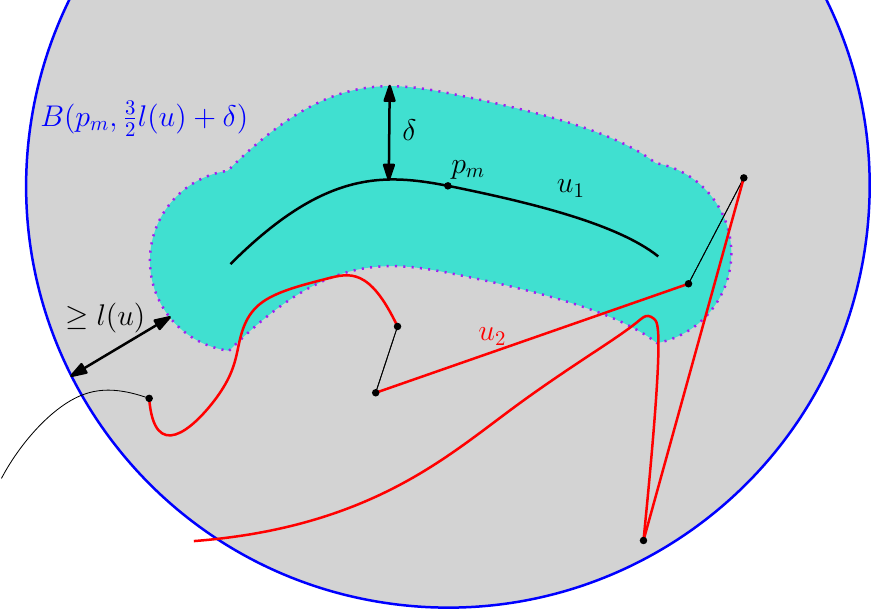}
    \caption{Illustration of the situation of Proposition~\ref{prop:linearrelcomplexity}.}\label{fig:counters}
\end{figure}

For a piece $u$ of $\curve_1$, consider the ball $B(p_m,(3/2)l(u)+\delta)$ of radius $(3/2)l(u)+\delta$ centered at its midpoint $p_m$. By construction, the distance of every point of $u$ to the boundary of $B(p_m,(3/2)l(u)+\delta)$ is at least $l(u)+\delta$. Therefore, every piece of $\curve_2$ that adds one to the counter of $u$ has an arc length of at least $l(u)$ in $B(p_m,(3/2)l(u)+\delta)$. Since $\curve_2$ is $c$-packed and $l(u)\ge \mu$, the number of times we add one to the counter of $u$ is bounded from above by
\begin{align*}
c'&=\frac{l(\curve_2)\cap B(p_m,(3/2)l(u)+\delta)}{l(u)}\le \frac{cr}{l(u)}=\frac{c(3/2l(u)+\delta)}{l(u)}= \frac 32 c + \frac{\delta c}{l(u)}\\
&\le \frac 32 c + \frac{\delta c}{\mu}=O\left(\frac{c}{\eps}\right). 
\end{align*} 
Thus, there are at most $O(cn/\eps)$ cells containing free space, concluding the proof of the proposition.
\end{proof}

\begin{theorem}\label{thm:simplelinearrelcomplexity}
Let $\curve_1$ and $\curve_2$ be two $c$-packed piecewise smooth curves and $0<\eps<1$. Then $N(\eps,\curve_1,\curve_2)=O(cn/\eps)$. 
\end{theorem}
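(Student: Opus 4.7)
The theorem is an essentially immediate consequence of Lemma~\ref{lem:simpcpacked} and Proposition~\ref{prop:linearrelcomplexity}, so the plan is simply to verify that the simplified curves satisfy the hypotheses of the latter, uniformly in $\delta$.

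Fix $\delta \ge 0$ and set $\mu = \eps\delta$; let $\curve_i' := \simp(\curve_i, \mu)$ for $i = 1, 2$. I would verify three properties of $\curve_1'$ and $\curve_2'$ in turn. First, by Lemma~\ref{lem:simpcpacked}, both $\curve_1'$ and $\curve_2'$ are $7c$-packed. Second, every piece of $\curve_i'$ has arc length at least $\mu$: this is explicit in the description of the simplification algorithm, since the pieces present are either newly added straight segments of length exactly $\mu$, or original pieces that the algorithm skipped precisely because their length was already at least $\mu$. Third, the simplification does not increase the number of smooth pieces --- in each iteration a contiguous run of original short pieces is collapsed into one added segment plus the truncated tail of the first piece that exits the ball --- so each $\curve_i'$ has at most $n$ smooth pieces.

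With these three facts in hand, Proposition~\ref{prop:linearrelcomplexity} applies with packedness constant $7c$ and arc-length lower bound $\mu = \eps\delta$, and bounds the number of cells of $\freespacediagram_\delta(\curve_1', \curve_2')$ containing non-empty free space by $O(cn/\eps)$. Since any cell intersecting the reachable free space $\rfreespace_{\le \delta}$ in particular contains some free space, this is also an upper bound on $N_{\le \delta}(\curve_1', \curve_2')$. The hidden constant in the $O$-notation depends only on $c$ (through the factor $7$) and not on $\delta$, so taking the maximum over $\delta \ge 0$ in Definition~\ref{def:epsrelativefreespace} yields $N(\eps, \curve_1, \curve_2) = O(cn/\eps)$.

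There is no substantial obstacle here: all of the technical work has already been carried out in Lemma~\ref{lem:simpcpacked} (preservation of $c$-packedness up to a constant factor) and Proposition~\ref{prop:linearrelcomplexity} (linear-in-$n$ cell bound for $c$-packed curves with sufficiently long pieces). The only points worth stating carefully are the piece-length guarantee and the fact that the piece count does not grow under simplification, both of which are direct readings of the algorithm, together with the observation that the constant in Proposition~\ref{prop:linearrelcomplexity} is independent of $\delta$, which is what makes the maximization in the definition of $N(\eps, \cdot, \cdot)$ harmless.
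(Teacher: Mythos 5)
Your proof is correct and follows essentially the same route as the paper: apply Lemma~\ref{lem:simpcpacked} to get $7c$-packedness, observe that the simplification with parameter $\mu = \eps\delta$ has all pieces of length at least $\mu$, and invoke Proposition~\ref{prop:linearrelcomplexity}. Your explicit remarks that the simplification does not increase the piece count and that the constant is uniform in $\delta$ are small but correct elaborations the paper leaves implicit.
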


\begin{proof}%[Proof of Theorem~\ref{thm:simplelinearrelcomplexity}]
The simplification of a curve with parameter $\mu=\delta\eps$ consists of pieces with arc length at least $\mu$. By Lemma~\ref{lem:simpcpacked}, the simplifications are moreover $7c$-packed. The claim thus follows from Proposition~\ref{prop:linearrelcomplexity}. 
\end{proof}

We can now apply the solution to the decision problem from Proposition~\ref{prop:decisionproblempath} to the decision problem for the simplified curves. Using this decision procedure in place of that for polygonal curves, the otherwise same algorithms (Lemmas~3.5 and~3.6) in~\cite{Driemel2012} can be used in our context to obtain an approximate solution to the decision problem for piecewise smooth $c$-packed curves that runs in $O(N(\eps,\curve_1,\curve_2))$ time. We note that $c$-packedness is used only in its role in guaranteeing that $N(\eps,\curve_1,\curve_2)=O(cn/\eps)$ by Theorem~\ref{thm:simplelinearrelcomplexity}.
\begin{corollary}\label{cor:decisionproblemsimpcurves}
Let $\curve_1$ and $\curve_2$ be two piecewise smooth algebraically bounded $c$-packed curves, $1\ge \eps >0$ and $\delta>0$. There is an algorithm that correctly outputs, in $O(cn/\eps)$ time, one of the following: 
\begin{enumerate}
    \item[(i)] a $(1+\eps)$-approximation to $\distf(\curve_1,\curve_2)$, 
    \item[(ii)] $\distf(\curve_1,\curve_2)<\delta$,
    \item[(iii)] $\distf(\curve_1,\curve_2)>\delta$.
\end{enumerate}
\end{corollary}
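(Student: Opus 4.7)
The plan is to apply the decision algorithm of Proposition~\ref{prop:decisionproblempath} to suitably simplified versions of $\curve_1$ and $\curve_2$, leveraging the relative free space complexity bound of Theorem~\ref{thm:simplelinearrelcomplexity}. Set $\mu := \eps\delta/8$ and let $\curve_i' := \simp(\curve_i, \mu)$ for $i = 1,2$. By Lemma~\ref{lem:simpcpacked}, each $\curve_i'$ is $7c$-packed, and by Lemma~\ref{lem:simpdistance} we have $\distf(\curve_i, \curve_i') \le \mu$, so the triangle inequality for the Fréchet distance yields
\begin{equation*}
\bigl|\distf(\curve_1, \curve_2) - \distf(\curve_1', \curve_2')\bigr| \le 2\mu = \eps\delta/4.
\end{equation*}

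I would then invoke the decision procedure of Proposition~\ref{prop:decisionproblempath} on $(\curve_1', \curve_2')$ at the two thresholds $\delta_+ := \delta + 2\mu$ and $\delta_- := \delta - 2\mu$, each nudged slightly if necessary to land at a regular value in the sense of Proposition~\ref{prop:regvalues}. If the test at $\delta_+$ returns \textsc{No}, the triangle bound gives $\distf(\curve_1, \curve_2) > \delta$, so we output case (iii). If the test at $\delta_-$ returns \textsc{Yes}, symmetrically $\distf(\curve_1, \curve_2) \le \delta$, and by decreasing $\delta_-$ by an arbitrarily small amount we get strict inequality and output case (ii). In the remaining case, $\distf(\curve_1', \curve_2') \in (\delta_-, \delta_+]$ and hence $\distf(\curve_1, \curve_2) \in ((1-\eps/2)\delta, (1+\eps/2)\delta]$, so $\delta$ itself is a valid $(1+\eps)$-approximation and we output case (i).

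The main obstacle is executing each of the two decisions on $(\curve_1', \curve_2')$ in $O(cn/\eps)$ time rather than the $O(mn)$ given directly by Proposition~\ref{prop:decisionproblempath}. By Theorem~\ref{thm:simplelinearrelcomplexity}, the number of original cells of $\freespacediagram_{\delta_{\pm}}(\curve_1', \curve_2')$ containing any free space is $O(cn/\eps)$, and by Proposition~\ref{prop:finitepointsfreespace} the per-cell work of the decision algorithm (including its internal subcell decomposition) is constant. It therefore suffices to explore only those cells reachable via a monotone path from $(0,0)$ through the free space, which I would do by propagating reachable intervals cell-by-cell in a BFS-style sweep through the free space diagram, spawning new cells only through subcell walls that carry a non-empty reachable interval. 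As indicated in the statement of the corollary, this is exactly the output-sensitive strategy of~\cite[Lemmas~3.5 and~3.6]{Driemel2012}, with their polygonal per-cell decider replaced by our subcell construction from Section~\ref{sec:freespacepartition}; the bulk of the verification is therefore checking that the outer framework of~\cite{Driemel2012} depends only on the $O(1)$-time local extraction of reachable intervals and on the $O(c)$-packedness of the simplified inputs, both of which hold in our setting.
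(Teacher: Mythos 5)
Your proposal is correct and follows essentially the same route as the paper: simplify both curves, bound the resulting free space complexity via Theorem~\ref{thm:simplelinearrelcomplexity}, and plug the per-cell decider from Proposition~\ref{prop:decisionproblempath} into the output-sensitive framework of~\cite[Lemmas~3.5 and~3.6]{Driemel2012}. The paper leaves the two-threshold test and the derivation of the $(1+\eps)$-approximation implicit in the citation to~\cite{Driemel2012}, whereas you spell them out; this is additional detail rather than a different argument.
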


\bibliography{bibliography}

\bibliographystyle{plainurl}

\newpage
\appendix

\end{document}